\apptocmd{\sloppy}{\hbadness 10000\relax}{}{}
\def\SOUL@hlpreamble{%
\setul{\dimexpr\dp\strutbox+1.5pt}{\dimexpr\ht\strutbox+\dp\strutbox+1.5pt\relax}
\let\SOUL@stcolor\SOUL@hlcolor
\SOUL@stpreamble
}
\newenvironment{customthm}[1]
  {\innercustomthm}
  {\endinnercustomthm}
\newtheorem{theorem}{Theorem}
\newtheorem{corollary}{Corollary}
\newtheorem{lemma}{Lemma}
\newtheorem{definition}{Definition}
\newif\ifcomments
\newcommand{\AOneL}{c(\mbox{\Aone, \texttt{L}})\xspace}
\newcommand{\AOneH}{c(\mbox{\Aone, \texttt{U}})\xspace}
\newcommand{\ATwoL}{c(\mbox{\Atwo, \texttt{L}})\xspace}
\newcommand{\ATwoH}{c(\mbox{\Atwo, \texttt{U}})\xspace}
\newcommand{\Aone}{$\text{A}1$\xspace}
\newcommand{\Atwo}{$\text{A}2$\xspace}
\newcommand{\POW}{PoW\xspace}
\newcommand{\AlgA}{\textsc{SybilControl\xspace}}
\newcommand{\AlgB}{\textsc{CCom}\xspace}
\newcommand{\AlgC}{REMP\xspace}
\newcommand{\Diffuse}{\textsc{Diffuse}\xspace}
\newcommand{\Iters}{\mathcal{I}}
\newcommand{\cgoal}{Committee Invariant\xspace}
\newcommand{\sgoal}{Population Invariant\xspace}
\newcommand{\joinRate}{J^G\xspace}  
\newcommand{\iJRate}{J^G\xspace} 
\newcommand{\joinRateBad}{J^B\xspace}
\newcommand{\depRateTot}{D\xspace}  
\newcommand{\joinRateAll}{J^{\mbox{\tiny all}}\xspace}  
\newcommand{\elen}{\ell\xspace} 
\newcommand{\epochRate}{\rho\xspace}  
\newcommand{\estSet}{\tilde{S}\xspace} 
\newcommand{\estDur}{\tilde{\elen}\xspace}  
\newcommand{\advCost}{\mathcal{T}\xspace}
\newcommand{\advAveCost}{T\xspace}
\newcommand{\algGM}{\textsc{GMCom}\xspace}
\newcommand{\genID}{\textsc{GenID}\xspace}
\newcommand{\setIDs}{\mathcal{S}\xspace}
\newcommand{\setGood}{\mathcal{G}\xspace}
\newcommand{\iterIDs}{\mathcal{S}\xspace}
\newcommand{\curIDs}{\mathcal{S}_{\mbox{\tiny cur}}\xspace}
\newcommand{\JoinEst}{\tilde{J}^{G}\xspace}
\newcommand{\bigconstant}{c(JE, \texttt{H})\xspace}
\newcommand{\smallconstant}{c(JE, \texttt{L})\xspace}
\newcommand{\defn}[1]{\textbf{\emph{#1}}}
\newcommand{\EstGoodJoin}{\textsc{Estimate-GoodJR}\xspace}
\newcommand{\tog}{\textsc{ToGCom}\xspace}
\begin{document}
\title{\tog: An Asymmetric Sybil Defense
}
\date{}

\author[1]{Diksha Gupta}
\author[2]{Jared Saia\thanks{This work is supported by the National Science Foundation grants CNS-1318880 and CCF-1320994.}}
\author[3]{Maxwell Young\thanks{This work is supported by the National Science Foundation grant CCF 1613772 and by a research gift from C Spire.}}
\affil[1]{\small Dept. of Computer Science, University of New Mexico, NM, USA\hspace{6cm} \mbox{\texttt{dgupta@unm.edu}}}
\affil[2]{\small Dept. of Computer Science, University of New Mexico, NM, USA\hspace{6cm} \mbox{\texttt{saia@cs.unm.edu}}}
\affil[3]{\small Computer Science and Engineering Dept., Mississippi State University, MS, USA\hspace{6cm} \texttt{myoung@cse.msstate.edu}}

\maketitle              
\begin{abstract}
Proof-of-work (\POW) is one of the most common techniques to defend against Sybil attacks.  
Unfortunately, current \POW defenses have two main drawbacks.  First, they require  work to be done even in the absence of attack.  Second, during an attack, they require the good identities (IDs) to spend as much as the attacker.

\smallskip

Recent theoretical work by Gupta, Saia and Young suggests the possibility of overcoming these two drawbacks.  In particular, they describe a new algorithm, GMCom, that always ensures that a minority of IDs are Sybil.   They show that rate at which all good IDs perform computation is $O(\joinRate + \sqrt{T(\joinRate+1)})$, where {\boldmath{$\joinRate$}} is the join rate of good IDs, and {\boldmath{$T$}} is the rate at which the adversary performs computation. 


\smallskip

Unfortunately, this cost bound only holds in the case where (1) GMCom always knows the join rate of good IDs; and (2) there is a fixed constant amount of time that separates join events by good IDs.  Here, we present \textbf{\tog}, which removes these two shortcomings.  To do so, we design and analyze a  mechanism for estimating the join rate of good IDs; and also devise a new method for setting the computational cost to join the system. Additionally, we evaluate the performance of \tog alongside prior \POW-based defenses. Based on our experiments, we design heuristics that further improve the  performance of \tog by up to $3$ orders of magnitude over these previous Sybil defenses.


\end{abstract}

\section{Introduction}

A \defn{Sybil attack} occurs when a single adversary pretends to be multiple identities (\defn{IDs})~\cite{douceur02sybil}.  One of the oldest defenses against Sybil attacks is \defn{proof-of-work (\POW)} \cite{dwork:pricing}, in which any ID that wishes to use network resources or participate in group decision-making must first perform some work, typically solving computational puzzles.  

 \POW defends against any resource-bounded adversary. Thus, it is broadly applicable, in contrast to approaches relying on domain-specific attributes, such as social-network topology, multi-channel wireless communication, or device locality.  

Unfortunately, a significant drawback of \POW defenses is ``the work".  In particular, current \POW approaches have significant computational overhead, given that puzzles must always be solved,  even when the system is not under attack.  This non-stop resource burning translates into a substantial energy and, ultimately, a monetary cost~\cite{economistBC,coindesk,arstechnica}.  Consequently,  \POW approaches are used primarily in applications where participants have a financial incentive to continually perform work, such as Bitcoin~\cite{nakamoto:bitcoin}, and other blockchain technologies~\cite{litecoin,blockstack}.
 This is despite numerous proposals for PoW-based defenses in other domains~\cite{parno2007portcullis,wang:defending,kaiser:kapow,green:reconstructing,feng:design,waters:new,martinovic:wireless,
borisov:computational,li:sybilcontrol}.

Recently, Gupta et al.~\cite{Gupta_Saia_Young_2019} described an algorithm, {\bf \algGM}, that addresses this drawback. Let an ID be called \defn{bad} if it is controlled by the Sybil adversary, and \defn{good} otherwise. \algGM ensures that a minority of IDs are bad, and that good IDs spend in total asymptotically less than the adversary. In particular, define the \defn{spend rate} as the computational cost over  all good IDs per second, where this cost is due to solving puzzles. Then, \algGM  ensures the algorithm spend rate is $O(\joinRate + \sqrt{T(\joinRate+1)})$, where {\boldmath{$\joinRate$}} is the join rate of good IDs, and {\boldmath{$T$}} is the adversary's spend rate.  A lower-bound is given showing that this spend rate is asymptotically optimal~\cite{Gupta_Saia_Young_2019}.  This defense is called \defn{asymmetric} since the algorithm's spend rate is sublinear in $T$.   

Unfortunately, this spend rate only holds in the case where (1) \algGM always knows the join rate of good IDs; and (2) there is a fixed constant amount of time that separates all join events by good IDs. We illustrate the shortcoming of \algGM's (incorrect) join-rate estimate in Section~\ref{section:empasym}, and the effect of arbitrarily-close join events can have on \algGM in Appendix~\ref{app:GMComFailure}.

\subsection{Our Contributions}\label{sec:contributions}

We introduce a new algorithm \textbf{\textsc {\underline{T}}otal \underline{o}ver \underline{G}ood \underline{Com}putation} (\defn{\tog}).\footnote{So named since computational cost to join is proportional to total join rate over good join rate (See Section~\ref{sec:appendGMCom}).} Like \algGM, \tog maintains the following two invariants, which limit the Sybil adversary's ability to monopolize resources and control the network.

\begin{itemize}[leftmargin=12pt]
\item {\defn{\sgoal}{\bf:}} The fraction of bad IDs in the system is always less than $1/6$.
\item {\defn{\cgoal}{\bf:}} There is always a \defn{committee}, of size logarithmic in the current system size, known to all good IDs, that contains less than a $1/2$-fraction of bad IDs.\footnote{These constants can be can be made smaller, up to $< \alpha$, at a cost of increasing the hidden constants in our resource costs.}
\end{itemize}

\noindent Additionally, we make the following new contributions.

\begin{enumerate}[leftmargin=12pt]

\item \tog spends at a rate of $O(\joinRate + \sqrt{T(\joinRate+1)})$, even (1) without always knowing the join rate of good IDs; and (2) when there is no lower bound on the time between join events of good IDs.

\item We simplify and reduce from $4$ to $2$ the number of assumptions on the behavior of good IDs that are needed for our analysis.

\item We empirically compare \tog against prior \POW defenses using real-world data from several networks. \tog performs up to $2$ orders of magnitude better than previous defenses, according to our simulations (Section~\ref{section:empasym}).

\item Using insights from our first experiments, we engineer and evaluate several heuristics aimed at further improving the performance of \tog. Our best heuristic performs up to 3 orders of magnitude better than previous algorithms for large-scale attacks (Section~\ref{section:heuristics}).
\end{enumerate}

\subsection{The General Network Model}\label{sec:model-main}

We now describe a general network model that aligns with many permissionless systems, including the work in~\cite{Gupta_Saia_Young_2019}. 

 \smallskip

\noindent{\bf Puzzles.}\label{sec:puzzle} IDs can construct computational puzzles of varying hardness, whose solutions cannot be stolen or pre-computed. A {\boldmath{$k$}}\defn{-hard puzzle} for any integer $k \geq 1$ imposes a computational cost of $k$ on the puzzle solver. These are common assumptions in \POW systems~\cite{nakamoto:bitcoin, li:sybilcontrol, andrychowicz2015pow}.  

\smallskip

\noindent{\bf Communication.}\label{sec:com} All communication among good IDs uses a broadcast primitive,~{\bf \Diffuse}, which  allows  a  good  ID  to  send  a  value  to all other good IDs within a known and bounded amount of time, despite an adversary. Such a primitive is a standard assumption in \POW schemes~\cite{Garay2015,bitcoinwiki,GiladHMVZ17,Luu:2016}; see~\cite{miller:discovering} for empirical justification.

When a message is diffused in the network, it is not possible to determine which ID initiated the diffusion of that message.  Good IDs have digital signatures, and each message originating at a good ID is signed by its private key; we note that no public key infrastructure is assumed.

A \defn{round} is the amount of time it takes to solve a $1$-hard puzzle plus the (shorter) time to diffuse the solution to the rest of the network.\footnote{Communication latency in Bitcoin is 12 seconds; puzzle solving time is 10 minutes~\cite{croman2016scaling}}   All IDs are assumed to be synchronized. 

\smallskip

\noindent{\bf Adversary.}\label{sec:adv} A single adversary controls all bad IDs.  This pessimistically represents perfect collusion and coordination by the bad IDs. Bad IDs may arbitrarily deviate from our protocol, including sending incorrect or spurious messages. The adversary can send messages to any ID at will, and can read the messages diffused by good IDs before sending its own.  It knows when good IDs join and depart, but it does not know the private bits of any good ID.   

The adversary controls an $\alpha$-fraction of computational power, where $\alpha>0$ is a small constant.
  That is, in a single round where all IDs are solving puzzles, the adversary can solve an $\alpha$-fraction of the puzzles; this is common in past \POW literature~\cite{nakamoto:bitcoin, andrychowicz2015pow, walfish2010ddos, GiladHMVZ17}.

\smallskip

\noindent{\bf Joins and Departures.}\label{sec:join} At most a constant fraction of the good IDs join or depart in any round.  Departing good IDs announce their departure to the network. In practice, each good ID can issue ``heartbeat messages'' that are periodically diffused and indicate to the committee that this ID is still alive.

The minimum number of good IDs in the system at any point is assumed to be at least {\boldmath{$n_0$}}. The \defn{system lifetime} is defined to be the duration over which $n_0^{\gamma}$ joins and departures occur, for any fixed constant {\boldmath{$\gamma$}} $>0$.




\section{Our Algorithm - \tog }\label{sec:our-problems}


We begin by describing our algorithm, \tog, highlighting our new method for setting puzzle hardness. Next, we develop intuition for why this method yields the asymmetric property, and how it relies on a robust estimate of the good join rate. This motivates the use of \EstGoodJoin as another critical component in our algorithm design.

\begin{figure}[t!]
\centering
\begin{tcolorbox}[standard jigsaw, opacityback=0]
\begin{minipage}[h]{0.95\textwidth}
\noindent \textbf{\textsc{\underline{T}otal \underline{o}ver \underline{G}ood \underline{Com}putation (\tog)}}
\medskip
\footnotesize

\noindent{\bf Key Variables}\\
\smallskip
\begin{tabular}{p{0.6cm} p{0.1cm} p{9cm}}
	$i$ &:& iteration number\\
	$n_i^a$ &:& number of IDs joining since beginning of iteration $i$\\
	$n_i^d$&:& number of IDs departing since beginning of iteration $i$\\
	$\iterIDs_i$&:& set of IDs at end of iteration  $i$ \\
	$\curIDs$&:& current set of IDs in system
\end{tabular}
\smallskip

\noindent{\bf Initialization}\smallskip

\noindent $i \leftarrow 1$\\ 
$\setIDs_0 \leftarrow$ set of IDs returned by initialization phase\\ 
$\estSet \leftarrow \setIDs_0$\\
\noindent $\JoinEst_{0}  \leftarrow$ obtained in initialization phase
\medskip

\noindent{\bf Execution}\smallskip

\noindent The committee maintains all variables above and makes all decisions using Byzantine Consensus, including those used in \EstGoodJoin, which is run continuously. For each iteration $i$, do:
\begin{enumerate}[leftmargin=14pt] 
    \colorlet{light-gray}{gray!25}
    \sethlcolor{light-gray}
    \setlength{\lineskip}{0pt}
	\item[1.] Each joining ID solves and diffuses the solution to an \hl{entrance puzzle of difficulty equal to the number of IDs that have joined in the last $1/\scriptstyle{\JoinEst_{i}}$ seconds of the current iteration, including the newly joining ID.}\smallskip
	
	\item[2.] When  $n_i^a + n_i^d \geq (1/11)|\setIDs_{i-1}|$, do:\smallskip
	
			\textbf{Perform Purge}
			\begin{itemize}[leftmargin=15pt]
			\item[(a)] The committee generates and diffuses a random string $r$ to be used in puzzles for this purge and entrance for the next iteration. 
			\item[(b)] $\iterIDs_i$ $\leftarrow$  set of IDs returning difficulty $1$ puzzle solutions within $1$ round.
			\item[(c)] The committee selects a new committee of size $\Theta(\log n_0)$ from $\iterIDs_{i}$ and sends out this information via \Diffuse.
			\item[(d)] $i \leftarrow i+1.$
			\end{itemize}
            
            \smallskip

            \begin{tcolorbox}[left=0mm,top=0mm,bottom = 0mm,width = 0.93\textwidth,boxrule=0mm, arc= 0mm, colback = gray!25, colframe = white!10]
            {\textbf{Estimate-GoodJR}
            \begin{itemize}[leftmargin=15pt]
            \item[(e)]$\estSet \leftarrow$ \mbox{most recent membership                   ensuring}\mbox{$|\curIDs - \estSet | \geq  (3/5)|\curIDs|$.} 
            \item[(f)] $\estDur\leftarrow$ \mbox{length of time between last two                  changes}\mbox{ of variable $\estSet$.}
            \item[(g)]  $\JoinEst_i \leftarrow |\curIDs|/\estDur$
            \end{itemize}}
            \end{tcolorbox}
\end{enumerate}
\end{minipage}
\end{tcolorbox}
\caption{Pseudocode for \tog.}
\vspace{-5pt}
\label{alg:gmcom}
\vspace{-10pt}
\end{figure}

\subsection{Overview of \tog}\label{sec:appendGMCom}

We describe  \tog while referencing its pseudocode in Figure~\ref{alg:gmcom}.  \tog differs critically from \algGM in two ways: (1) the method for calculating the difficulty of puzzles assigned to joining IDs, and (2) the estimation of the good join rate.   The first change handles problems that may arise when good join events occur arbitrarily close together (see Section~\ref{app:GMComFailure}). The second enables a constant-factor estimate of the good join rate as shown in Section~\ref{sec:estimating}.
These two changes are highlighted in the pseudocode.

Execution occurs over disjoint periods of time called \defn{iterations}, and each iteration $i \geq 1$ consists of Steps 1 and  2. 

In Step 1, each joining ID must solve an \defn{entrance puzzle} of difficulty $1$ plus the number of IDs that join within the last $1/\JoinEst_i$ seconds where $\JoinEst_i$ is the estimate of the join rate of good IDs for iteration $i$ using \EstGoodJoin.  This entrance cost approximates the ratio of the total join rate over the good join rate, which motivates the name {\textsc{\underline{T}otal \underline{o}ver \underline{G}ood \underline{Com}putation}}.

Step 1 lasts until the earliest point in time when the number of IDs that join in iteration $i$, {\boldmath{$n_i^a$}},  plus the number of IDs that depart in iteration $i$,  {\boldmath{$n_i^d$}}, is at least  $(1/11)|\setIDs_{i-1}|$. The quantities $n_i^a$ and $n_i^d$ are tracked by the committee.

When Step 1 ends, a \defn{purge} is performed by the committee by issuing a  $1$-hard puzzle via \Diffuse in Step 2(a). In Step 2(b), each ID must respond with a valid solution within $1$ round. The committee removes unresponsive or late-responding IDs from its whitelist, which is maintained using Byzantine consensus amongst the committee members. 

The current committee then selects $\Theta(\log n_0)$ IDs uniformly at random from  $\setIDs_{i}$ in Step 2(c).  The committee uses \Diffuse to inform $\setIDs_{i}$ that the selected IDs are the new committee for iteration $i+1$. All messages from committee members are verified via public key digital signatures.  This only requires that all IDs know the digital signatures of the $\Theta(\log n_0)$ good committee members. This information is diffused in Step 2(c); we omit this detail in Figure~\ref{alg:gmcom} for ease of presentation. Since the committee has a good majority and coordinates its actions via Byzantine consensus, no public-key infrastructure is required.

The remainder of Step 2 consists of \EstGoodJoin, our new procedure for estimating the good-ID join rate. This estimate is computed in Step 2(g), and used for setting the entrance cost. In Section~\ref{sec:estimating}, we give a detailed description of \EstGoodJoin; its proof of correctness is given in Section~\ref{app:estimate}.

Finally, system initialization is achieved by solving the {\defn{\genID}} problem, where there is a set of good IDs, and an adversary with an $\alpha$-fraction of computational power who controls bad IDs. All good IDs must agree on a set of IDs that contains (1) all good IDs, and (2) at most an $\alpha$-fraction of bad IDs.  Solving \genID is a heavy-weight operation, but \algGM does this only once at system initialization. A number of algorithms exist for solving \genID~\cite{andrychowicz2015pow,hou2017randomized}.


\subsection{Developing Intuition for the Asymmetric Property}\label{sec:entrance-cost} 


Initially, the asymmetric result may be surprising, and so we offer intuition for this.  Consider iteration $i$.  In the absence of an attack, the entrance cost should be proportional to the good join rate. This is indeed the case since the puzzle difficulty is $O(1)$ corresponding to the number of (good) IDs that join within the last $1/\JoinEst_i  \approx 1/\iJRate_i$ seconds.

In contrast, if there is a large attack, then the entrance-cost function imposes a significant cost on the adversary. Consider the case where a batch of many bad IDs is rapidly injected into the system. This drives up the entrance cost since the number of IDs joining within  $1/\JoinEst_i$ seconds increases.

More precisely, assume the adversary's spending rate is $T = \xi  \joinRateAll_i$, where {\boldmath{$\xi$}} is the entrance cost, and {\boldmath{$\joinRateAll_i$}} is the join rate for all IDs. For the good IDs, the spending rate due to the entrance cost is $\xi \joinRate_i$, and the spending rate due to the purge cost is $\joinRateAll_i$. Setting these to be equal, and solving for $\xi$, we get $\xi  = \joinRateAll_i/ \joinRate_i$;  in other words, the number of IDs that have joined over the last $1/ \joinRate_i$ seconds.  This is the entrance cost function that best balances entrance and purge costs.

Spending rate of good IDs due to the entrance costs and purge costs is:
$$\xi \joinRate_i + \joinRateAll_i  \leq  2\joinRateAll_i = 2\sqrt{\left(\joinRateAll_i\right)^2} = 2 \sqrt{\joinRateAll_i  \xi \joinRate_i} =  2\sqrt{\joinRate_iT},$$
where the first inequality holds by our setting of $\xi$, the third step since $\joinRateAll_i = \xi \joinRate_i$, and the final step since $T = \xi  \joinRateAll_i$. This informal analysis shows how knowledge of the good join rate can be used to reduce the algorithmic spend rate.

\subsection{\EstGoodJoin}\label{sec:estimating}

Given the above intuition, a method for estimating $\joinRate_i$ is needed. However, good IDs cannot be discerned from bad IDs upon entering the system,  and so the adversary may inject bad IDs in an attempt to obscure the true join rate of good IDs. Designing a robust procedure for obtaining   $\joinRate_i$ is tricky. In this section, we describe our estimation algorithm, {\textbf{\EstGoodJoin}}, defined in Figure~\ref{alg:gmcom}, Steps (e) - (g), and later prove its correctness in Section~\ref{sec:theoretical}.

For any dynamic system, our analysis makes use of a disjoint period of time called an \defn{epoch}; roughly, this is the duration of time until the system membership of good IDs changes by a constant fraction.

\begin{definition}\label{def:epoch}
Let {\boldmath{$\setGood_{i}$}} be the set of good IDs in the system at the end of epoch $i$ and let epoch $1$ begin at system initialization. Then, epoch $i$ is defined as the shortest amount of time until $|\setGood_{i} - \setGood_{i-1}| \geq (3/4)|\setGood_{i}|$.
\end{definition}

Let {\boldmath{$\epochRate_j$}} be the join rate of good IDs in epoch $j$; that is, the number of good IDs that join in epoch $j$ divided by the number of seconds in epoch $j$.  

We define two assumptions, \Aone and \Atwo, on the join rate of good IDs (not bad IDs). We let $c(\texttt{A}, \texttt{L})$ and  $c(\texttt{A}, \texttt{U})$ be positive constants used to, respectively, lower bound and upper bound a quantity pertaining to the assumption \texttt{A}. 
\vspace{-5pt}
\begin{itemize}[leftmargin=10pt]
\item \defn{Assumption {\Aone.}} For all $j > 1$, $\AOneL\, \epochRate_{j-1}  \leq \epochRate_{j} \leq \AOneH\, \epochRate_{j-1}$.\smallskip
\item \defn{Assumption {\Atwo.}} For any period of time within epoch $j$ that contains at least $2$ good join events, the good join rate during that period is between $\ATwoL\,\epochRate_{j}$ and $\ATwoH\, \epochRate_{j}$.
\end{itemize} 
\vspace{-5pt}
Informally, assumption \Aone implies that the {rate at which good IDs join does not change by too much from one epoch to the next}.  assumption \Atwo implies that two or more consecutive good join events cannot be too close or too spread out. In Section~\ref{s:JandLAssum},  we give empirical evidence supporting these two assumptions.


\subsection{Description of \EstGoodJoin}

\EstGoodJoin continually keeps track of the sets {\boldmath{$\estSet$}}, which is the most recent system membership such that  $|\curIDs - \estSet| \geq \frac{3}{5} |\curIDs|$, and {\boldmath{$\curIDs$}} is the set of  IDs currently in the system. Whenever $\estSet$ is updated --- and only when $\estSet$ is updated --- the parameter {\boldmath{$\estDur$}} is set to the length of time since the system membership was most recently $\estSet$, and we refer to this length of time as an \defn{interval}. Then, {\boldmath{$\JoinEst_i$}}$= |\curIDs|/\estDur$ is used as an estimate for the true good join rate over iteration $i$, denoted by {\boldmath{$\iJRate_i$}}.  

\begin{figure}[t!]
\centering 
\vspace{-10pt}
\includegraphics[ height=3.5cm]{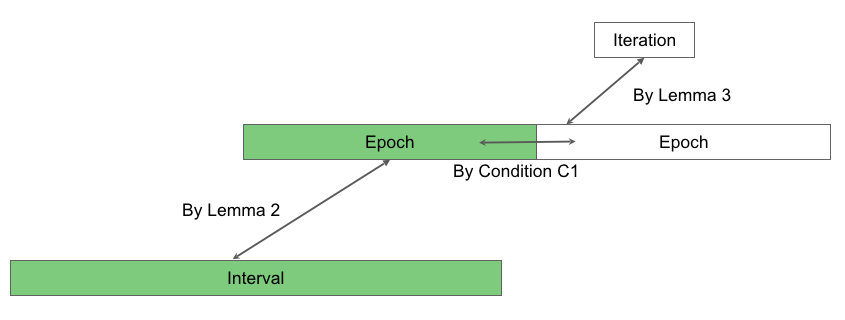}
\vspace{-4pt}\caption{\small A depiction of the relationship between the good join rate in intervals, epochs, and iterations. Arrows represent that  good join rates are within constant factors of each other. Green indicates the most recently-finished epoch and  interval being tracked by the committee to obtain $\JoinEst_i$.}
  \label{fig:estimating-JG}  
  \vspace{-10pt}
\end{figure}

Our approach for estimating the good-ID join rate  may be of independent interest for other settings where Assumptions \Aone and \Atwo hold. Therefore, we provide intuition for why they are necessary. 

 Since good and bad IDs cannot be distinguished with certainty,  it is challenging to determine when an epoch begins and ends.  Instead, \EstGoodJoin calculates a lower bound on the number of new good IDs that have joined the system by pessimistically subtracting out the fraction of IDs that could be bad.  When this estimate of new good IDs is sufficiently large --- that is, when the current interval ends ---  \EstGoodJoin ``guesses'' that at least one  epoch has occurred.  However, multiple (but still a constant number of) epochs may have actually occurred during this time. Thus, for this approach to yield a constant-factor estimate of $\iJRate_i$, the true good join rate cannot have changed ``too much" between these epochs, and \Aone bounds the amount of such change.

Assumption \Aone addresses the good join rate measured over entire, consecutive epochs. But, an interval may end {\it within} some epoch $j$. If the good join rate over the portion of epoch $j$ overlapped by the interval deviates by more than a constant factor from $\epochRate_j$, then the estimate will be inaccurate.  Assumption \Atwo ensures that this cannot happen.



\section{Theoretical Results}\label{sec:theoretical}

\smallskip

First, we prove that \EstGoodJoin achieves a constant-factor approximation, $\JoinEst_i$, to the good-ID join rate $\iJRate_i$ (Theorem \ref{t:JoinEst}). Define {\defn{``with high probability'' (w.h.p.)}} to mean with probability at least $1 - 1/n_0^{\gamma}$. All of our arguments hold with high probability over the system lifetime, and they rely on Assumptions \Aone and \Atwo. Due to space constraints, we include proofs in Appendix\ref{sec:appendProofCost}.

\begin{theorem} \label{t:JoinEst}
For any iteration $i\geq 1$, \EstGoodJoin provides a constant-factor estimate of  the good-ID join rate: 
$$ \smallconstant \iJRate_{i} \leq \hspace{-1pt}\JoinEst_i \hspace{-3pt} \leq \bigconstant \iJRate_{i}$$
where $\smallconstant = \left(\frac{5}{6}\right) \cfrac{\AOneL^2 \ATwoL}{\AOneH} $ and $\bigconstant = \cfrac{5\,\AOneH^2 \ATwoH}{\AOneL}$.
\end{theorem}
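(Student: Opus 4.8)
The plan is to track the relationship between four time-scales — intervals (used by \EstGoodJoin), epochs (the analysis unit), and iterations — and show each pair of associated good join rates is within a constant factor. Recall $\JoinEst_i = |\curIDs|/\estDur$, where $\estDur$ is the length of the most recently completed interval, i.e., the time over which the system membership changed so that $|\curIDs - \estSet| \geq (3/5)|\curIDs|$. First I would establish a ``sandwich'' lemma on a single interval: because the interval-ending condition pessimistically subtracts out the at-most-$(1/6)$-fraction of bad IDs (by the \sgoal), a completed interval witnesses at least a constant fraction of fresh \emph{good} joins relative to $|\curIDs|$, but also not more than a constant number of epochs' worth, by Definition~\ref{def:epoch}. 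Concretely: an interval must contain at least one full epoch (else the good-membership change is too small to force $|\curIDs-\estSet|\geq(3/5)|\curIDs|$ after discounting bad IDs), and it contains at most a constant number of epochs (since one epoch already changes the good membership by a $(3/4)$-fraction, a constant number of them overshoots the $(3/5)$-threshold on total membership). So if interval $\estDur$ spans epochs $j, j{+}1,\dots,j{+}k$ with $k=O(1)$, then $\iJRate$ measured over $\estDur$ is, by Assumption~\Atwo applied within the two boundary epochs and by counting good joins, within constant factors of $\epochRate_j$ — here I would need $\ATwoL, \ATwoH$ to handle the partial-epoch overlaps at the two ends of the interval.

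Second I would relate $\epochRate_j$ (the epoch overlapping the \emph{end} of the tracked interval) to $\iJRate_i$ (the good join rate over iteration $i$). An iteration ends when $n_i^a + n_i^d \geq (1/11)|\setIDs_{i-1}|$, so an iteration is ``short'' relative to an epoch; the interval tracked to produce $\JoinEst_i$ is the one most recently completed at the start of (or during) iteration $i$, hence the relevant epoch index $j$ and the epoch(s) overlapping iteration $i$ differ by at most a constant. Then Assumption~\Aone, chained over that constant number of epochs, gives $\epochRate_j$ within $[\AOneL^{O(1)}\iJRate_i,\ \AOneH^{O(1)}\iJRate_i]$. The stated constants $\smallconstant = (5/6)\,\AOneL^2\ATwoL/\AOneH$ and $\bigconstant = 5\,\AOneH^2\ATwoH/\AOneL$ pin down exactly how many epoch-hops occur: the $\AOneH^2$ / $\AOneL^2$ suggests two forward epoch-steps plus one backward correction, the lone $\ATwoL$ / $\ATwoH$ is the single partial-overlap discount, and the $5/6$ and $5$ come from the $(3/5)$-threshold combined with the $(1/6)$ Sybil bound. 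So the final step is simply to multiply the per-stage constant factors and verify the bookkeeping matches the claimed $\smallconstant$ and $\bigconstant$.

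I would present this as: (i) a lemma bounding the number of epochs per interval above and below by constants, using \sgoal and Definition~\ref{def:epoch}; (ii) a lemma that the good join rate over one interval is within $(\ATwoL, \ATwoH)$-ish factors of the good join rate of each epoch it substantially overlaps, via Assumption~\Atwo; (iii) a lemma chaining Assumption~\Aone across the $O(1)$ epochs separating the tracked interval's epoch from iteration $i$'s epoch; and (iv) a final computation composing these. The main obstacle is step (i)/(ii): carefully handling that the $(3/5)$-symmetric-difference threshold is on the \emph{total} membership (good plus bad) while epochs are defined purely on \emph{good} membership, so the discounting argument (bad fraction $< 1/6$, hence good symmetric difference is at least $(3/5 - $ slack$)$ of the good count) has to be done tightly enough to guarantee ``at least one epoch'' without slack eating the constant — and symmetrically, that an interval spanning $k$ epochs forces $k$ bounded above, which requires that good-membership change is additive-ish across epochs even though Definition~\ref{def:epoch} measures symmetric difference against a moving reference set. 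That set-arithmetic with symmetric differences across overlapping windows is where I expect the real care to be needed; the constant-chasing afterward is routine.
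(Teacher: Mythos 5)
Your overall architecture matches the paper's: convert the interval's total-membership change into a good-membership change using the Population Invariant, use Assumption \Atwo to relate the good join rate over the interval to the rate of the epoch(s) it overlaps, use Assumption \Aone to hop from that epoch to the epoch(s) overlapping iteration $i$, and compose the constants. However, your key structural lemma (i) is stated backwards, and the error is exactly in the place you flagged as delicate. You claim an interval \emph{must contain at least one full epoch}, ``else the good-membership change is too small to force $|\curIDs-\estSet|\geq(3/5)|\curIDs|$.'' The paper proves the opposite (Lemma~\ref{lem:interval-epochs}): an interval touches at most two epochs and can \emph{never} completely contain one. The reason is that a full epoch changes the good set by $(3/4)|\setGood|$, which by the \sgoal{} translates to a total-set change of at least $(3/4)(5/6)|\setIDs| = (5/8)|\setIDs| > (3/5)|\setIDs|$ --- so the interval's threshold is necessarily crossed before the epoch completes. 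In the other direction, the interval threshold, after pessimistically discounting bad IDs, only guarantees a good-set change of $(3/5-1/6)|\setIDs| \geq (2/5)|\setIDs|$, which is well below the $(3/4)$-of-good needed to end an epoch; so intervals routinely end strictly inside a single epoch. Your lower bound of ``at least one full epoch'' is therefore false, and the case analysis built on it (``spans epochs $j,\dots,j+k$'' with a guaranteed full epoch in the middle) does not reflect the actual situation, which is one or two \emph{partial} epochs with the entire rate estimate resting on Assumption \Atwo.

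A second, smaller gap: Assumption \Atwo{} only applies to periods containing at least two good join events. When the interval straddles two epochs and its overlap with the earlier epoch contains only a single good join, \Atwo{} gives you nothing there; the paper handles this with a separate case (its Case 3) via an explicit factor-of-$2$ adjustment to the numerator and denominator, and that adjustment --- not just the $(3/5)$ threshold and the $(1/6)$ Sybil bound --- is where the leading $5/6$ and $5$ in $\smallconstant$ and $\bigconstant$ come from. Your constant-bookkeeping sketch would not reproduce the stated constants without this case. The \Aone-chaining in your second stage (epoch of the tracked interval versus epoch(s) touching iteration $i$, at most one extra hop) is essentially the paper's Lemma~\ref{lem:four} plus the final hop in the theorem proof, and that part of your plan is sound.
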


\noindent The following theorem bounds the spend rate of \tog.  We explicitly include the constants from Assumptions \Aone and \Atwo.
\begin{theorem}\label{thm:new-main-upper}
For $\alpha \leq 1/18$, w.h.p. over the system lifetime, \tog maintains the Population and Committee invariants and ensures:
$$\mathcal{A}\leq 11 d_2  \left(d_1\sqrt{2 \advAveCost(\bigconstant \joinRate + 1)}  + \joinRate \right)$$
\noindent where $d_1 =\sqrt{2\bigconstant}$ and $d_2 = \left( \frac{12}{11} + \dfrac{\AOneH \ATwoH}{11\smallconstant} \right)$.
\end{theorem}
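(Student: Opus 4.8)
The plan is to split the total computation performed by good IDs into the \emph{entrance cost} (the Step~1 puzzles solved by joining IDs) and the \emph{purge cost} (the $1$-hard puzzles of Step~2(b)), to bound the rate of each over a single iteration, and then to amortize over the system lifetime. Throughout I would condition on the event --- which holds w.h.p.\ by Theorem~\ref{t:JoinEst} --- that $\smallconstant\,\iJRate_i \le \JoinEst_i \le \bigconstant\,\iJRate_i$ for every iteration $i$. This inequality is precisely what couples the publicly computable quantity $\JoinEst_i$ that sets puzzle hardness to the unobservable true good join rate $\iJRate_i$, and it is the origin of both constant factors $\smallconstant$ and $\bigconstant$ in the final bound.

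The first task is the two invariants, established by induction on iterations. For the \sgoal, the key mechanism is that the entrance cost of a joining ID equals the number of IDs that joined in the preceding $1/\JoinEst_i$ seconds, so any burst of bad IDs large enough to threaten the $1/6$ threshold drives this cost up roughly in proportion to the burst size; hence the adversary's spend to mount such a burst grows quadratically in its size and exceeds the $\alpha$-fraction of the round's work available to it once $\alpha\le 1/18$ (which supplies the slack between $1/6$ and the adversary's reach), while the purge triggered once churn reaches $(1/11)|\setIDs_{i-1}|$ bounds how stale the committee's whitelist can become. For the \cgoal, once the \sgoal holds the uniformly random choice of $\Theta(\log n_0)$ members from $\setIDs_i$ in Step~2(c) has a bad minority except with probability $n_0^{-\Omega(1)}$ by a Chernoff bound, and a union bound over the polynomially-many iterations in the system lifetime finishes it.

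Next I would analyze one iteration $i$, of length $L_i$, writing $\joinRateAll_i,\iJRate_i,\joinRateBad_i$ for the all-ID, good-ID, and bad-ID join rates during it and $T_i$ for the adversary's spend rate during it. The entrance cost charged per joining ID is, up to an additive $1$, at most $\joinRateAll_i/\JoinEst_i$ on average, so the good entrance spend rate is at most about $\iJRate_i\,\joinRateAll_i/\JoinEst_i \le (\iJRate_i+\joinRateBad_i)/\smallconstant$ using $\JoinEst_i \ge \smallconstant\,\iJRate_i$; the purge contributes at most $|\setGood_i|$ of good work per iteration, and the iteration-ending rule $n_i^a+n_i^d \ge (1/11)|\setIDs_{i-1}|$ forces $L_i$ to be at least about $|\setIDs_{i-1}|/\bigl(11(\joinRateAll_i+D_i)\bigr)$, with $D_i$ the departure rate, so the good purge spend rate is at most about $11(\iJRate_i+\joinRateBad_i+D_i)$, where $D_i$ amortizes against $\iJRate_i$ over the lifetime. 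The crucial quantitative input is a bound on $\joinRateBad_i$: a sustained bad join rate $\joinRateBad_i$ forces the entrance cost to settle near $\joinRateBad_i/\JoinEst_i$, so the adversary pays on the order of $\joinRateBad_i^2/\JoinEst_i$ per second, whence $\joinRateBad_i \le \sqrt{\JoinEst_i\,T_i} \le \sqrt{\bigconstant\,\iJRate_i\,T_i}$ by the \emph{upper} estimate $\JoinEst_i\le\bigconstant\,\iJRate_i$. This step is exactly where $\bigconstant$ enters, and it is what makes both the entrance and purge spend rates sublinear in $T_i$; making it rigorous against an adversary that bursts rather than injecting steadily is one of the two delicate points, and it is handled via Assumption~\Atwo, which certifies that a window of length $1/\JoinEst_i$ contains only a constant number of good IDs, so that a large entrance cost is witnessed almost entirely by bad IDs that each pay it. Substituting, the good spend rate over iteration $i$ is bounded by an explicit constant (built from $\smallconstant$ and $\bigconstant$) times $\iJRate_i + \sqrt{\bigconstant\,\iJRate_i\,T_i}$.

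Finally I would amortize to lifetime averages: summing the per-iteration bounds weighted by $L_i$ and dividing by the total elapsed time, Cauchy--Schwarz turns $\sum_i L_i\sqrt{\iJRate_i\,T_i}$ into $\sqrt{(\sum_i L_i\,T_i)(\sum_i L_i\,\iJRate_i)}$, the first factor equals $\advAveCost$ times the elapsed time, and the second is at most $d_2$ times $\joinRate$ times the elapsed time. This last inequality is the second delicate point and the place Assumptions~\Aone and~\Atwo are indispensable: an iteration can be short and lie inside a single epoch whose instantaneous good join rate is atypically high, but \Atwo keeps the within-epoch rate within $\ATwoH$ of $\epochRate_j$, \Aone keeps $\epochRate_j$ within $\AOneH$ of that of its neighbor, and the constant $\smallconstant$ of Theorem~\ref{t:JoinEst} ties $\epochRate_j$ back to the quantities defining $\joinRate$; threading these factors through yields exactly $d_2=\tfrac{12}{11}+\tfrac{\AOneH\ATwoH}{11\,\smallconstant}$. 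Collecting the prefactor $11$ from the purge rule, the factor $d_1=\sqrt{2\bigconstant}$ coming from the $\sqrt{\bigconstant\,\iJRate_i\,T_i}$ terms (with the extra ``$+1$'' inside the radical absorbing the regime where $\iJRate_i$ is so small that the entrance cost is pinned at its minimum value $1$), and $d_2$ from the averaging then gives $\mathcal{A}\le 11 d_2\bigl(d_1\sqrt{2\advAveCost(\bigconstant\joinRate+1)}+\joinRate\bigr)$, as claimed.
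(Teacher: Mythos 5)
Your proposal follows essentially the same route as the paper's proof: it decomposes the spend rate into entrance and purge costs, derives the key bound $\joinRateBad_i \lesssim \sqrt{T_i(\bigconstant\,\iJRate_i+1)}$ from the quadratic growth of the adversary's entrance spend, uses the purge-trigger condition to bound the per-iteration algorithmic cost by $d_2|\setIDs_{i-1}|/\ell_i$, and amortizes across iterations via Cauchy--Schwarz, recovering the same constants $d_1$ and $d_2$. The only divergences are harmless bookkeeping --- the paper obtains the bad-join bound by ranking bad IDs within sub-iterations of length $1/\JoinEst_i$ (no appeal to Assumption \Atwo{} is needed there), and the ``$+1$'' inside the radical arises from $\lceil\JoinEst_i\rceil\le\JoinEst_i+1$ rather than from the minimum entrance cost.
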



The proof of Theorem~\ref{thm:new-main-upper} differs considerably from that of \algGM~\cite{Gupta_Saia_Young_2019} (see Appendix \ref{sec:appendasym}). We omit our arguments for the Population and Committee invariants, since those are unchanged from~\cite{Gupta_Saia_Young_2019}.


\section{Experiments}\label{sec:experiments} 

We now report our empirical results. In Section \ref{s:JandLAssum}, we test assumptions \Aone and \Atwo from Section~\ref{sec:estimating}. In section \ref{section:empasym}, we measure the computational cost for \tog, as a function of the adversarial cost, and compare it against prior PoW based algorithms.  Finally,  in Section \ref{section:heuristics}, we propose and implement several heuristics to improve the performance of \tog. All our experiments were written in MATLAB. 

\smallskip
\noindent{\bf Data Sets.} Our experiments use data from the following networks: \vspace{-7pt}
\begin{itemize}
\item[1.]  {\it Bitcoin:} This dataset records the join and departure events of IDs in the Bitcoin network, timestamped to the second, over roughly 7 days \cite{7140490}.  
\item[2.] {\it BitTorrent RedHat:} This dataset simulates the join and departure events for the BitTorrent network to obtain a RedHat ISO image. We use the Weibull distribution with shape and scale parameters of 0.59 and 41.0, respectively, from \cite{Stutzbach:2006:UCP:1177080.1177105}.
\item[3.] {\it Ethereum:} This dataset simulates join and departure events of IDs for the Ethereum network. Based on a study in \cite{kim2017measuring}, we use the Weibull distribution with shape parameter of 0.52 and scale parameter of 9.8. 
\item[4.] \textit{Gnutella:} This dataset simulates join and departure events for the Gnutella network. Based on a study in~\cite{rowaihy2007limiting}, we use an exponential distribution with mean of $2.3$ hours for session time, and Poisson distribution with mean of $1$ ID per second for the arrival rate. 
\end{itemize}

\subsection{Testing Assumptions \Aone and \Atwo}\label{s:JandLAssum}

\smallskip
\noindent{\bf Experimental Setup.} For the Bitcoin network, the system starts with $9212$ IDs, and the join and departure events are based on the dataset from \cite{neudecker-atc16}. For the other networks, we initialize the system with $1000$ IDs, and simulate the join and departure events over $1000$ epochs.  We assume all joining IDs are good.  Every value plotted is the mean of $20$ independent runs. 

To test Assumption \Aone, for each epoch $i \geq 2$, the good join rate in epoch $i$, $\rho_i$, is compared to the good join rate in the previous epoch, $\rho_{i-1}$. Results are summarized in the columns labeled $\AOneL$ and $\AOneH$ of Table \ref{tab:assumptions}.

To test Assumption \Atwo, for each epoch $i\geq 1$, we consider all periods of time containing at least $2$ good joins events. In particular, we measure the minimum and maximum join rate for epoch $i$, denoted by {\boldmath{$\rho_i^{min}$}} and {\boldmath{$\rho_i^{max}$}}, respectively, and compare these values against $\rho_i$. The results are presented in Table~\ref{tab:assumptions}.  We have included the corresponding plots in Appendix \ref{app:plots}.

\begin{table*}[h]
\centering
\begin{tabular}{|p{3.5cm}|p{1.5cm}|p{1.5cm}|p{1.5cm}|p{1.5cm}|}
\hline
\textbf{Network} & $\AOneL$ & $\AOneH$ & $\ATwoL$ & $\ATwoH$ \\
\hline
Bitcoin  & 0.1 & 10 & 0.0005 & 30 \\
\hline
BitTorrent RedHat & 0.125 & 8 & 0.067 & 15 \\
\hline
Ethereum Mainnet & 0.5 & 2 & 0.4 & 2 \\
\hline
Gnutella & 0.5 & 2 & 0.1 & 4 \\
\hline
\end{tabular}	
\caption{Constants Assumptions \Aone and \Atwo for Section \ref{s:JandLAssum} }
\label{tab:assumptions}
\end{table*}


 \subsection{Evaluating Computational Cost without Heuristics}\label{section:empasym}

We now measure the spend rate for \tog, focusing solely on the computational cost of solving puzzles.  Throughout, we assume a computational cost of $k$ for solving a puzzle of difficulty $k$. We compare the performance of \tog against four PoW-based Sybil defense algorithms:   \textbf{\algGM}~\cite{Gupta_Saia_Young_2019}, \textbf{\AlgB}~\cite{pow-without}, \textbf{\AlgA} \cite{li:sybilcontrol} and \textbf{\AlgC} (a name that uses the authors' initials)\cite{rowaihy2005limiting}, summarized below.

\smallskip
\noindent\textbf{\algGM.} \algGM is like \tog, except for two differences.  First, the entrance cost is the maximum of $1$, and the measured join rate in the current iteration divided by an estimate of the good join rate.  Second, the estimate of the good join rate is computed via a different (incorrect) heuristic~\cite{Gupta_Saia_Young_2019}.

\noindent\textbf{\AlgB.} \AlgB is the same as \tog except the entrance cost is always 1. \smallskip

\noindent\textbf{\AlgA.} Each ID solves a puzzle to join. Additionally, each ID tests its neighbors with a puzzle every $5$ seconds, removing from its list of neighbors those IDs that fail to provide a solution within a fixed time period.  These tests are not coordinated between IDs.\smallskip

\noindent\textbf{\AlgC.}  Each ID solves a puzzle to join. Additionally, each ID must solve puzzles every {\boldmath{$W$}} seconds. We use Equation (4) from \cite{rowaihy2005limiting} to compute the value of computational spending rate per ID as: 

\begin{equation}\label{eq:cal_w}
	 \frac{L}{W} = \frac{n}{ N_{attacker}} = \frac{T_{max}}{\alpha N}\\
\end{equation}

\noindent where $L$ is the computational cost to an ID per $W$ seconds, $n$ is the number of IDs that the adversary can add to the system and $N_{attacker}$ is the total number of attackers in the system. Suppose $N$ is the system size, then $N_{attacker}$ is $\alpha N$ since the computational power with the adversary is an $\alpha$  fraction of computational power of the network in our model. Suppose $T_{max}$ is the maximum number of attackers (bad IDs) that can join in $W$ seconds, then to guarantee that the fraction of bad IDs is less than half, $n = T_{max}$. Substituting these values in Equation \ref{eq:cal_w}, we can compute total algorithmic spending rate as:
\begin{equation}\label{eq:REMP}
	 \mathcal{A}_{REMP} = (1-\alpha)N \times \frac{L}{W} = \frac{(1-\alpha)T_{max}}{\alpha}
\end{equation}

\smallskip
\noindent{\bf Setup.}
We assume the good IDs join and depart as described in Section \ref{s:JandLAssum}. We set $\alpha = 1/18$, and  let $T$ range over $[2^0,2^{30}]$, where for each  value of $T$, the system is simulated for $10,000$ seconds. We also simulate the case $T = 0$. We assume that the adversary only solves puzzles to add IDs to the system.  For REMP, we consider two values of $T_{max}$, $10^4$ and $10^7$. Setting $T_{max} = 10^7$ ensures correctness for all values of $T$ considered, and $T_{max} = 10^4$ ensures correctness for $T \leq 10^4$. 

\smallskip
\noindent{\bf Results.} Figure \ref{fig:AvsT} 
illustrates our results; we omit error bars since they are negligible. The x-axis is the adversarial spending rate, $T$; and the y-axis is the algorithmic spending rate, $\mathcal{A}$.  We cut off the plots for REMP-$10^4$ and \AlgA, when they can no longer ensure that the fraction of bad IDs is less than $1/2$. We also note that REMP-$10^7$ only ensures a minority of bad IDs for up to $T = 10^7$. 

\tog always has spend rate as low as the other algorithms for $T \geq 100$, and significantly less than the other algorithms for large $T$, with improvements that grow to about $2$ orders of magnitude.  In Section~\ref{section:heuristics}, our heuristics close this gap, allowing \tog to outperform all algorithms for all $T \geq 0$.  The spend rate for \tog is linear in $\sqrt{T}$, agreeing  with our analytical results.  We emphasize that the benefits of \tog are consistent over four disparate networks.  These results illustrate the value of \EstGoodJoin.

Finally, we note that \tog guarantees a fraction of bad IDs no more than $1/6$ for all values of $T$. In contrast, \AlgA~and REMP guarantee a fraction of bad IDs less than $1/2$ for the values of $T$ plotted.

For $T\geq 100$, \algGM and \AlgB perform almost identically. This occurs because of an error in the estimation heuristic of~\cite{Gupta_Saia_Young_2019}, which causes \algGM to incorrectly estimate $\joinRate$, when $T$ is much larger than $\joinRate$ ($\joinRate \approx 10$ in these plots).  When the estimate is incorrect, by the specification in~\cite{Gupta_Saia_Young_2019}, the entrance-puzzle hardness is set to $1$, and so \algGM reverts to \AlgB.
Our simulations in~\cite{Gupta_Saia_Young_2019} assumed knowledge of the good join rate, and thus did not reveal the flaw in  \algGM's estimation method.


\subsection{Heuristics}\label{section:heuristics}

\begin{figure*}[t!]
\centering
\captionsetup[subfigure]{labelformat=empty}
\begin{subfigure}{0.42\textwidth}
	\includegraphics[trim = 1.8cm 8cm 1.8cm 8cm, width=1\textwidth]{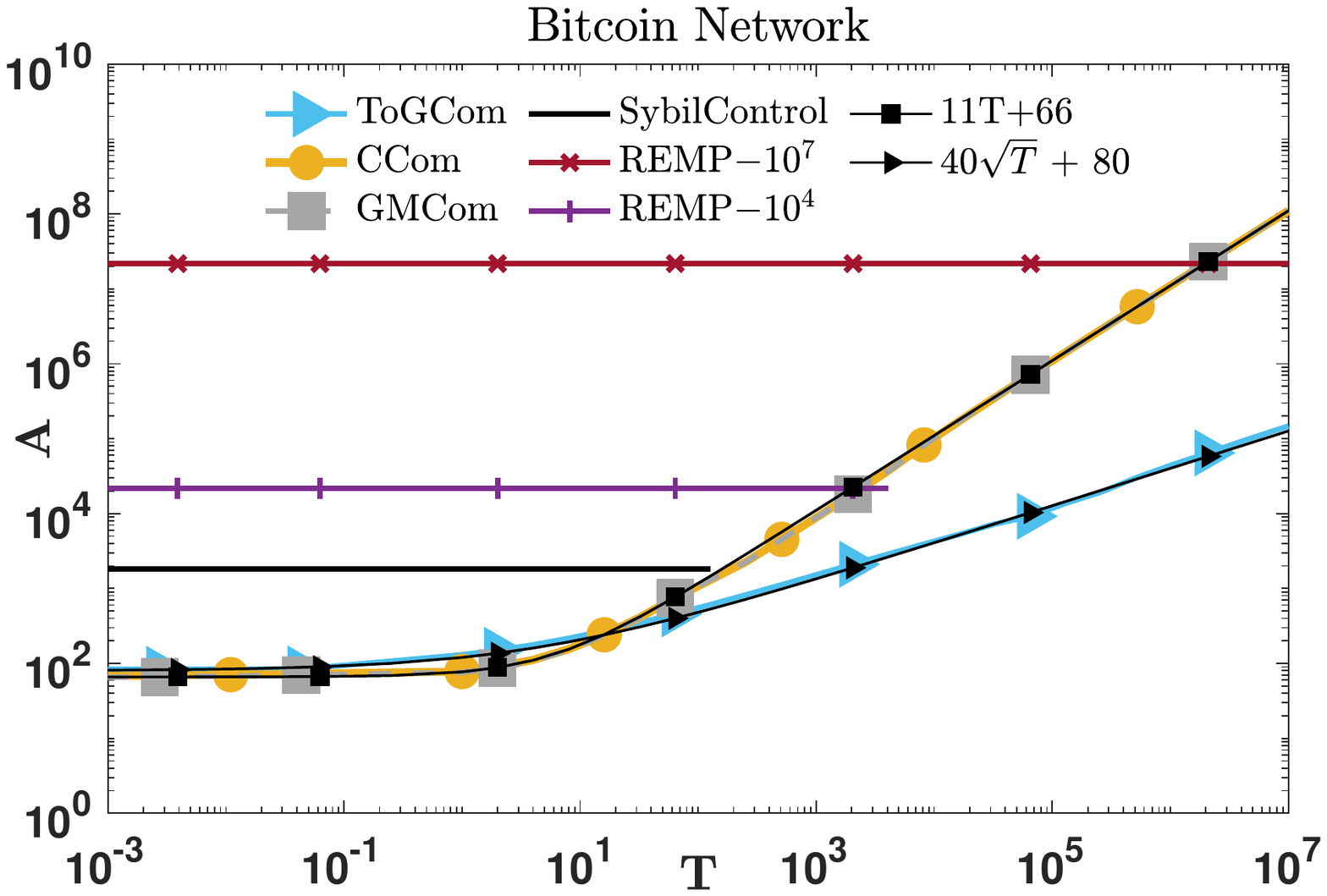}
	\vspace{-15pt}
\end{subfigure}
\vspace{2pt}
\begin{subfigure}{0.42\textwidth}
	\includegraphics[trim = 1.8cm 8cm 1.8cm 8cm, width=1\textwidth]{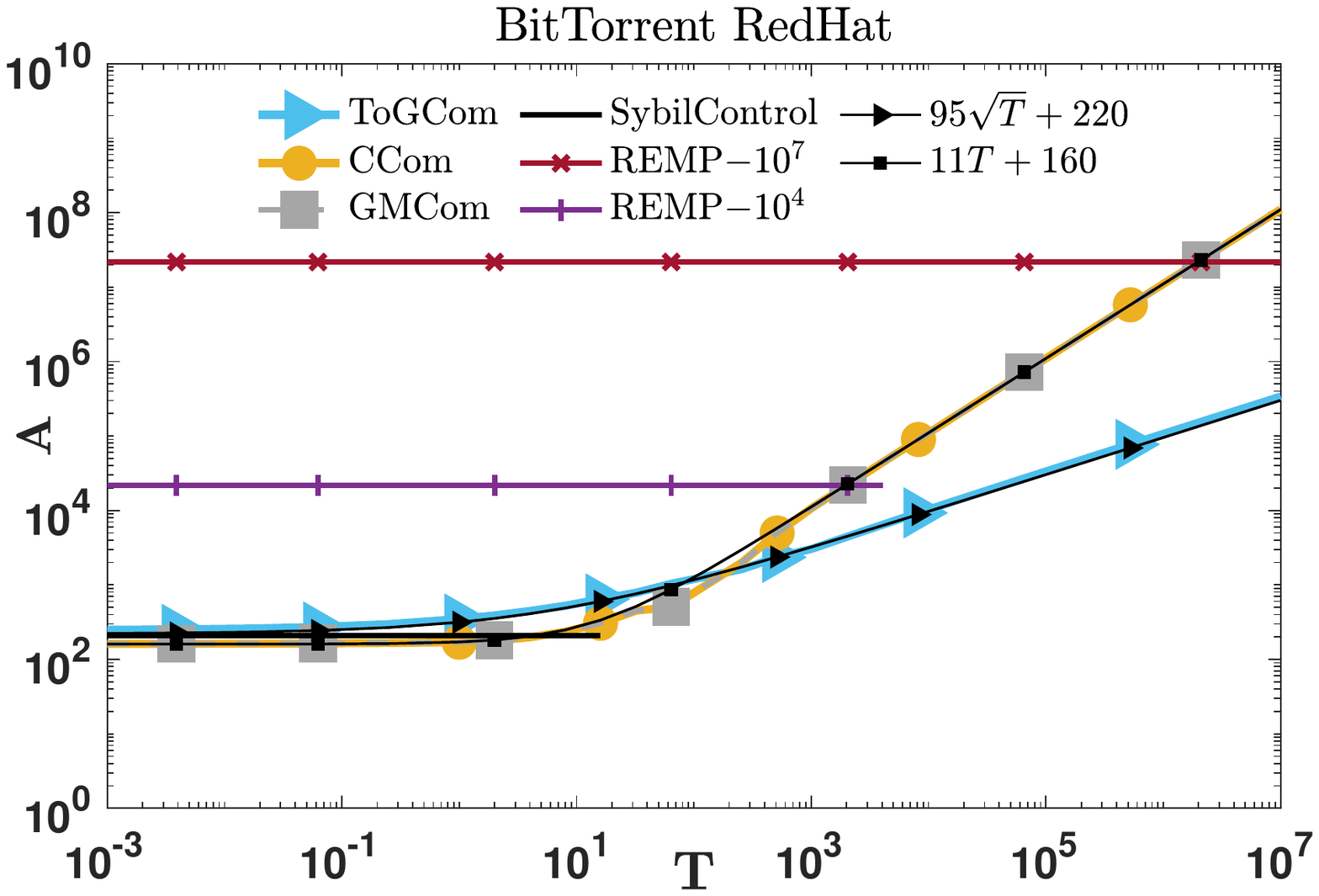}
	\vspace{-15pt}
\end{subfigure}
\vspace{2pt}
\begin{subfigure}{0.42\textwidth}
	\includegraphics[trim = 1.8cm 8cm 1.8cm 8cm, width=1\textwidth]{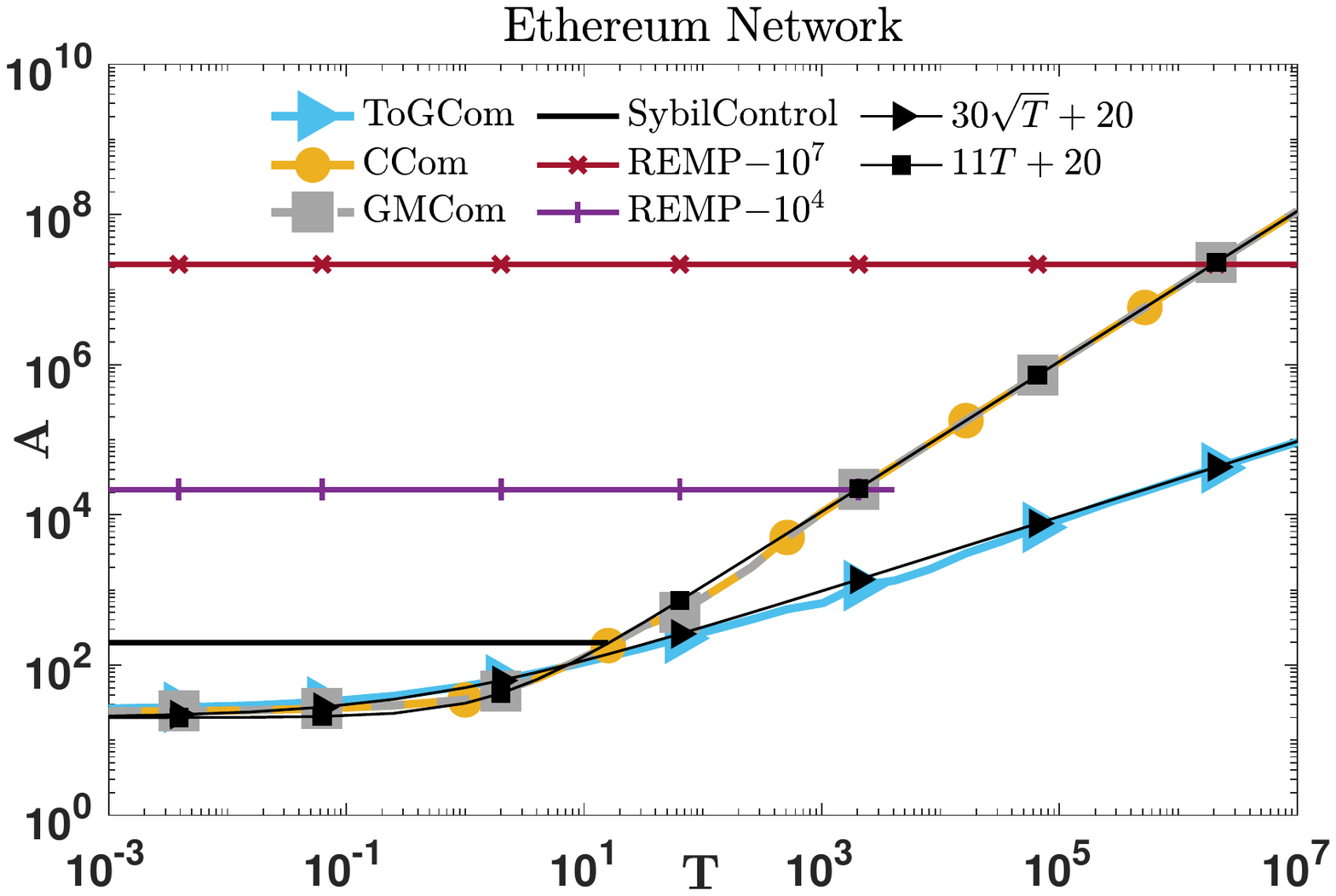}
	\vspace{-15pt}
\end{subfigure}
\begin{subfigure}{0.42\textwidth}
 	\includegraphics[trim = 1.8cm 8cm 1.8cm 8cm, width=1\textwidth]{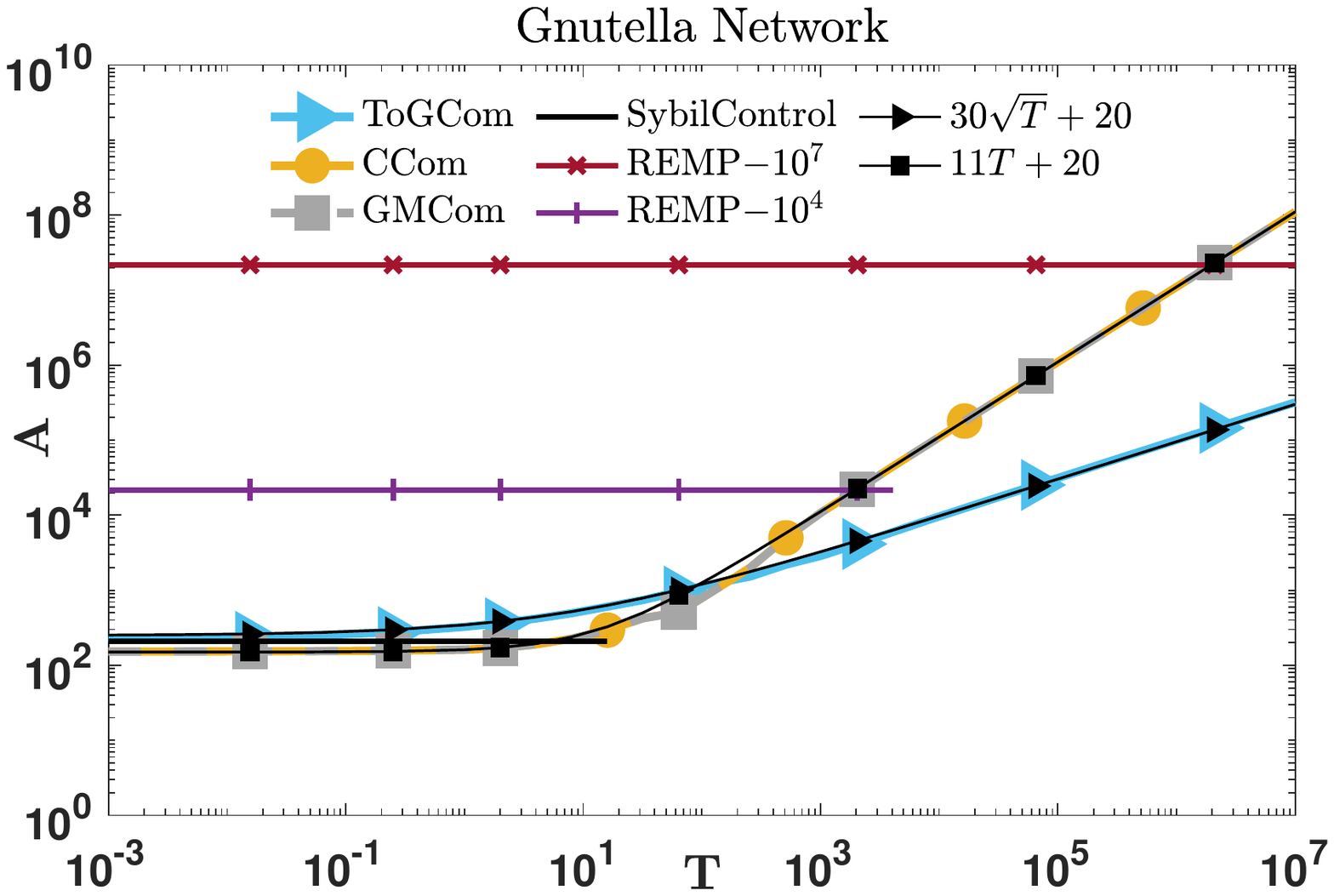}
	\vspace{-15pt}
\end{subfigure}
\vspace{-5pt}
\caption{Algorithmic cost versus adversarial cost for \tog, \algGM, \AlgB, SybilControl and REMP.}
\label{fig:AvsT}

\end{figure*} 

\smallskip

\begin{figure*}[t!]
\centering
\captionsetup[subfigure]{labelformat=empty}
\begin{subfigure}{0.42\textwidth}
	\centering
	\includegraphics[trim = 1.6cm 7.5cm 2cm 9cm, width=1\textwidth]{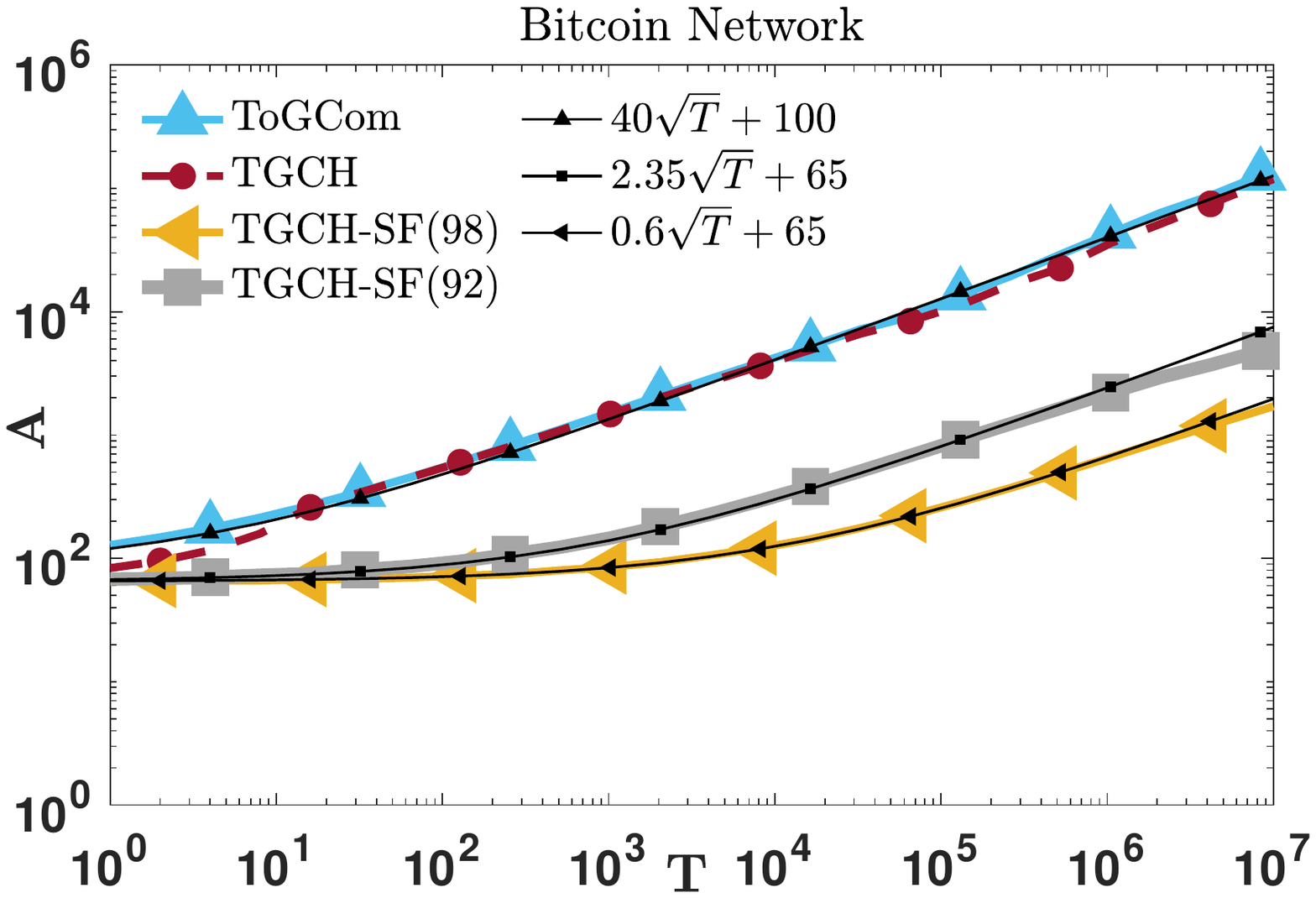} 
	\vspace{-20pt}
	\vspace{5pt}
\end{subfigure}
\begin{subfigure}{0.42\textwidth}
	\includegraphics[ trim = 1.6cm 7.5cm 2cm 9cm, 
	width=1\textwidth] {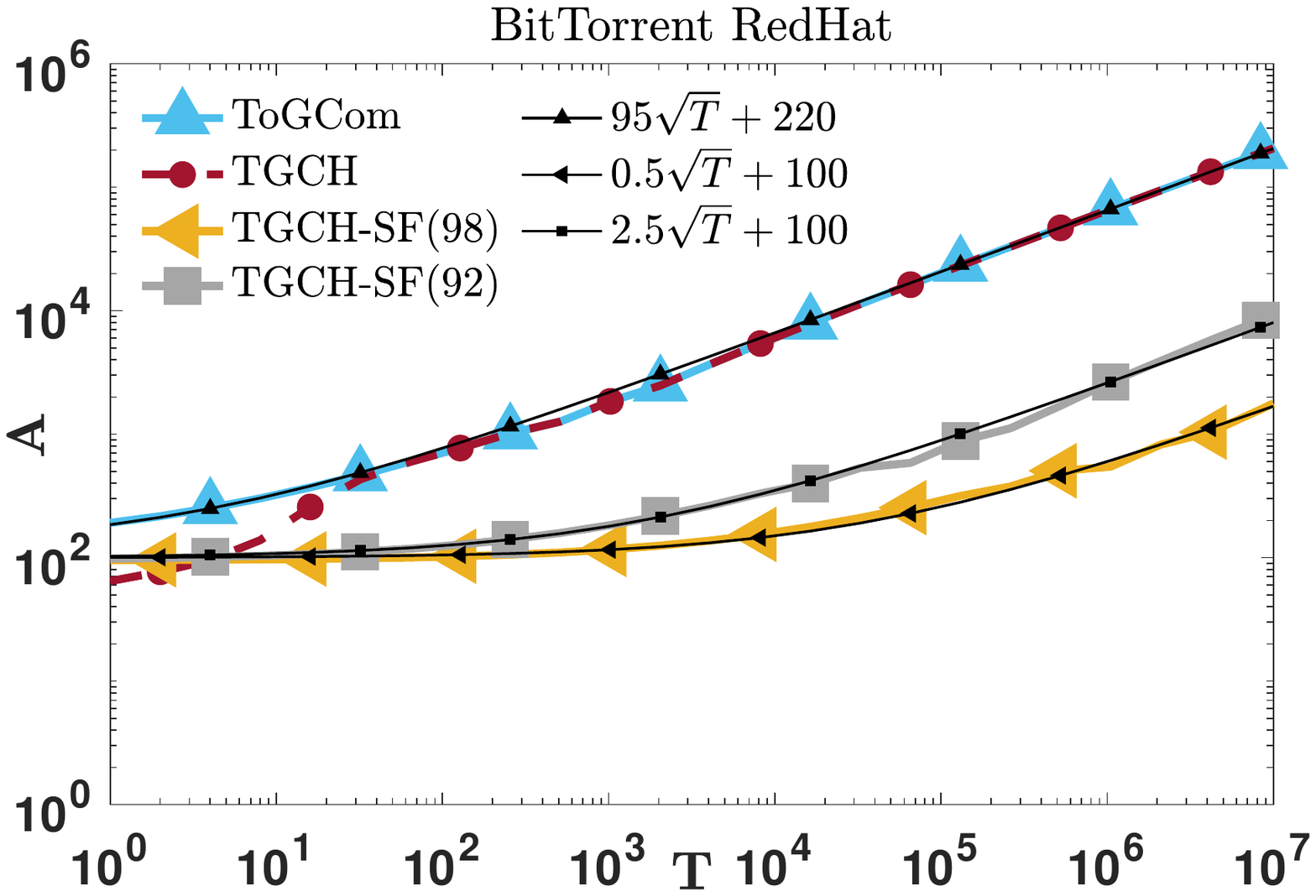}
	\vspace{-20pt}
	\vspace{5pt}
\end{subfigure}
\begin{subfigure}{0.42\textwidth}
	\includegraphics[trim = 1.5cm 8cm 1.5cm 8cm,   width=1\textwidth]{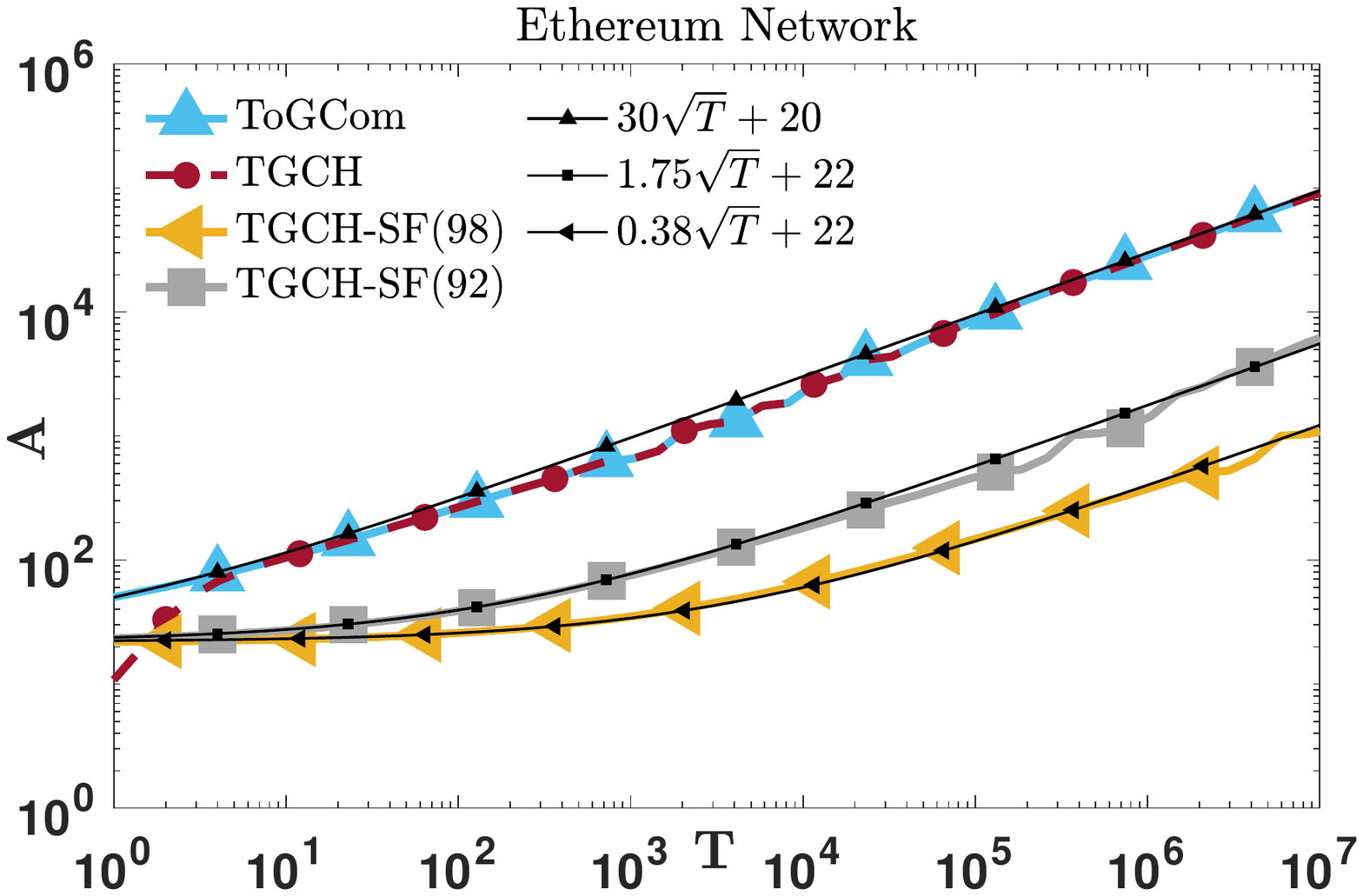} 
	\vspace{-20pt}
\end{subfigure}
\begin{subfigure}{0.42\textwidth}
	\includegraphics[trim = 1.5cm 8cm 2cm 8cm,  width=1\textwidth]{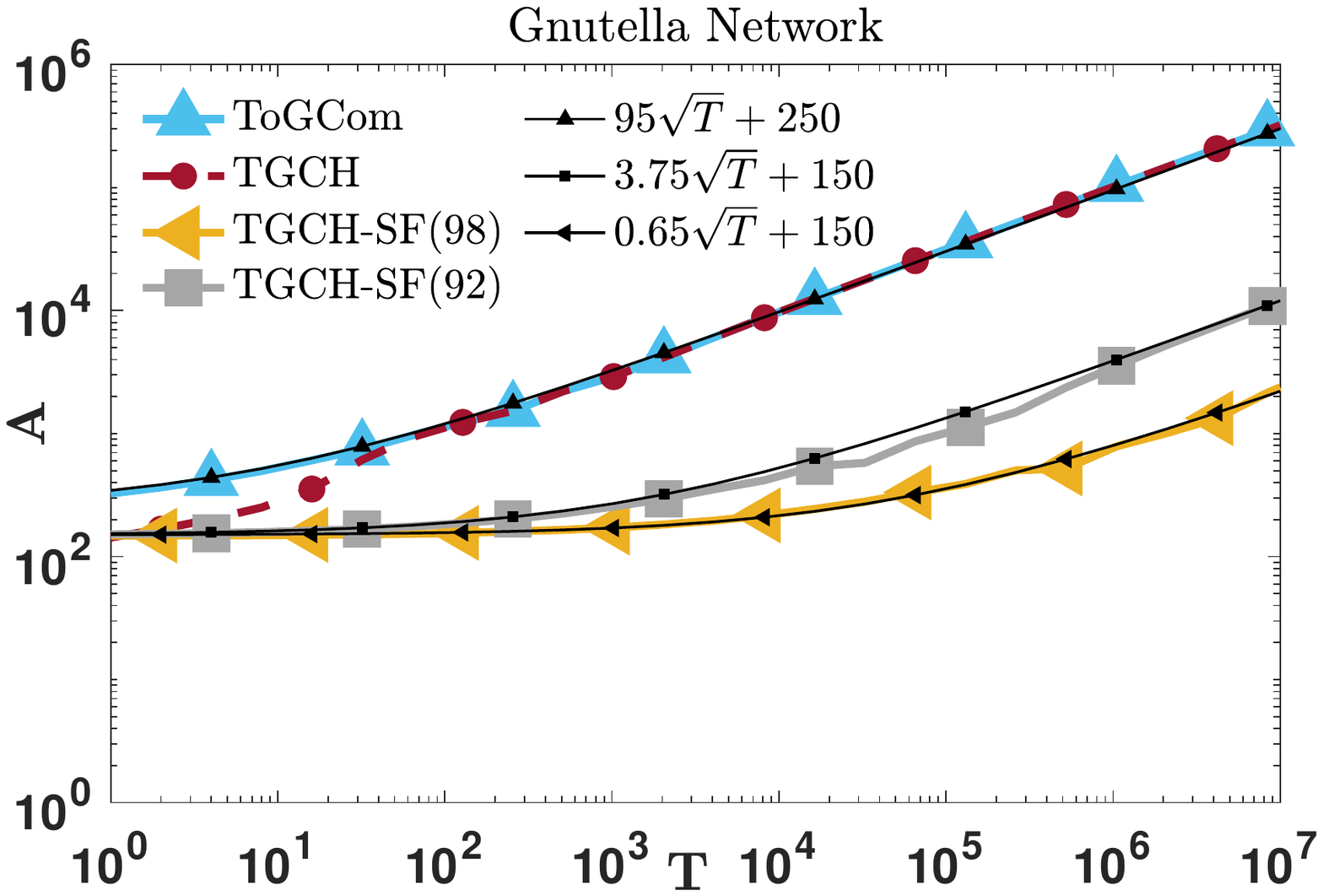} 
	\vspace{-20pt}
\end{subfigure}
\caption{Algorithmic cost versus adversarial cost for \tog and heuristics.}

\label{fig:Heuristic}
\end{figure*}

Next, we present heuristics to improve the performance of \tog. To determine effective heuristics, we focus on two separate costs to good IDs: entrance cost and purge cost. In studying these costs for the Bitcoin network in the absence of an attack, we find that the purge cost dominates and, therefore, we focus on reducing the purge frequency.

\smallskip

\noindent\textbf{Heuristic 1:} We use the symmetric difference to determine when to do a purge.  Specifically, for iteration $i$, if $|(\curIDs \cup \setIDs_{i-1}) - (\curIDs \cap \setIDs_{i-1})| \geq |\setIDs_{i-1}|/11$, then a purge is executed.  This ensures that the fraction of bad IDs can increase by no more than in our original specification.  Also, it decreases the purge frequency: for example, in the case when some ID joins and departs repeatedly.

\smallskip

\noindent\textbf{Heuristic 2:} We use the estimated good-ID join rate, obtained via \EstGoodJoin, to bound the maximum number of bad IDs that can have joined during an iteration.  This allows us to upper-bound the fraction of bad IDs in the system and to purge only when the population invariant is at risk.

\smallskip

\noindent\textbf{Heuristic 3:} Recent works have explored the possibility of identifying bad IDs based on the network topology~\cite{misra2016sybilexposer,gao2018sybilfuse}. In our experiments, we focus on SybilFuse~\cite{gao2018sybilfuse}, which has the probability of correctly classifying an ID as either good or bad as $0.92$ and $0.98$ based on the  empirical results from \cite{gao2018sybilfuse}, Section IV-B, last paragraph. We assume these values hold and use SybilFuse to diagnose whether a joining ID is good or bad; in the latter case, the ID is refused entry.

\medskip

We evaluate the performance of these heuristics against \tog. The experimental setup is the same as Section~\ref{section:empasym}. We define \defn{TGCH} to be \tog using both Heuristic 1 and Heuristic 2. We define \defn{TGCH-SF(92)} and \defn{TGCH-SF(98)} to be \tog using Heuristics 1 and 2, and also Heuristic 3, with the accuracy parameter of Heuristic 3 as 0.98 and 0.92.

Figure \ref{fig:Heuristic} 
illustrates our results. 
Note that TGCH-SF(92) and TGCH-SF(98) reduce costs significantly during adversarial attack, with improvements of up to three orders of magnitude during the most significant attack tested. Again, these improvements are consistent across $4$ different types of data sets.



\section{Related Work} \label{sec:related-work}

There is large body of literature on defending against the Sybil attack~\cite{douceur02sybil}; for example, see surveys~\cite{newsome:sybil,mohaisen:sybil,john:soft}, and additional work documenting real-world Sybil attacks~\cite{bitcoin-sybil,6503215,Yang:2011:USN:2068816.2068841}. To the best of our knowledge, \AlgB~\cite{pow-without} and \algGM~\cite{Gupta_Saia_Young_2019} are the first defenses where the algorithmic spend rate grows slowly with the adversarial spend rate.


\smallskip

\noindent{\bf Domain-Specific Defenses.} While \POW-based defenses work in general network settings, domain-specific results for mitigating the Sybil attack have been discovered.  In a wireless network with multiple communication channels, Sybil attacks can be mitigated via \emph{radio-resource testing} which relies on the inability of the adversary to listen to many channels simultaneously~\cite{monica:radio,gilbert:sybilcast,gilbert:who}.  However, this approach may fail if the adversary can monitor most or all of the channels. Furthermore, even in the absence of attack, radio-resource testing requires testing at fixed intervals. 

\smallskip
\noindent\textit{\bf Social Network Properties.}  Several results that leverage social networks for Sybil resistance \cite{yu:sybilguard,mohaisen:improving,wei:sybildefender}. However, social-network information may not be available in many settings. Another idea is to use network measurements to verify the uniqueness of IDs~\cite{sherr:veracity,liu:mason,Gil-RSS-15}, but these techniques rely on accurate measurements of latency, signal strength, or round-trip times, for example, and this may not always be possible. Containment strategies are explored in overlays~\cite{danezis:sybil,scheideler:shell}, but these results do not ensure a bound on the fraction of bad IDs.

\smallskip
\noindent\textit{\bf Proof of Work and Alternatives.}  As a choice for~\POW, computational puzzles provide certain advantages. First, verifying a solution is much easier than solving the puzzle itself. This places the burden of proof on devices that wish to participate in a protocol rather than on a verifier. In contrast, bandwidth-oriented schemes, such as~\cite{walfish2010ddos}, require verification of sufficient number of packets being received before any service is provided; this requires effort by the verifier that is proportional to the number of packets. 

A recent alternative to \POW is \defn{proof-of-stake (PoS)} where security relies on the adversary holding a minority stake in an abstract finite resource~\cite{abraham:blockchain}. When making a group decision, PoS weights each participant's vote using its share of a limited resource; for example, the amount of cryptocurrency held by the participant.  A well-known example is ALGORAND~\cite{GiladHMVZ17}, which employs PoS to form a committee.   A hybrid approach using both \POW and PoS has been proposed in the Ethereum system~\cite{ethereum-pos}.  


\section{Conclusion}\label{sec:future}

We have presented and analyzed \tog, which proposes a novel method for setting the entrance-puzzle difficulty, and for estimating the good-ID join rate. Additionally, we obtained a tight analysis of the asymmetric property. Our proposed heuristics further improve performance, and  experiments show that \tog decreases computational cost to the good IDs compared to other \POW-based Sybil defenses.

%
%

%
%
%

\newpage
\appendix

\section{Appendix}\label{sec:appendProofCost}

\smallskip

\subsection{Proof of Theorem~\ref{t:JoinEst}}\label{app:estimate}
We say that interval $\estDur$ \defn{touches} an epoch if there is a point in time belonging to both the interval $\estDur$ and the interval of time corresponding to the epoch; it does not necessarily mean that $\ell$ completely contains the epoch, or vice versa.

In this section, for any time {\boldmath{$t$}}, let {\boldmath{$\setIDs_t$}} and {\boldmath{$\setGood_t$}} denote the set of all IDs and set of good IDs, respectively, in the system at time $t$. 

\begin{lemma}\label{lem:interval-epochs}
An interval touches at most two epochs and cannot completely overlap any single epoch.
\end{lemma}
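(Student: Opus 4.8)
The plan is to reduce the lemma to the single numerical fact $\tfrac{3}{4}\cdot\tfrac{5}{6}=\tfrac{5}{8}>\tfrac{3}{5}$: the $3/4$-change in the \emph{good} population that defines an epoch, even after conceding via the \sgoal that up to a $1/6$-fraction of the population is bad, still exceeds the $3/5$-change in the \emph{total} population that advances $\estSet$ and hence closes an interval. Write the interval as a time span $[a,b]$, where $b$ is the moment $\estSet$ changes to its current value and $a$ is the most recent time at which the system membership equalled that value, so $\estDur=b-a$. From the update rule (at every moment $\estSet$ is the \emph{most recent} membership $M$ with $|\curIDs\setminus M|\ge\tfrac{3}{5}|\curIDs|$) I would extract the structural property $(\star)$: for every $t\in(a,b)$ we have $|\setIDs_{t}\setminus\setIDs_{a}|<\tfrac{3}{5}|\setIDs_{t}|$ — equivalently, $\setIDs_{a}$ does not become a valid reference until time $b$; making this precise uses only that the reference snapshot chosen by the rule advances over time, which in turn uses that only a constant fraction of good IDs join or depart per round.

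The core step is a monotonicity claim: if $[s_1,s_2]$ is an epoch in the sense of Definition~\ref{def:epoch} with $a\le s_1$, then $b\le s_2$. Suppose instead $s_2<b$, so $s_2\in(a,b)$. By the definition of an epoch, at least $\tfrac{3}{4}|\setGood_{s_2}|$ good IDs join during $(s_1,s_2]$ and are present at $s_2$; each joined after $s_1\ge a$, hence lies in $\setIDs_{s_2}\setminus\setIDs_{a}$. Since the \sgoal gives $|\setGood_{s_2}|>\tfrac{5}{6}|\setIDs_{s_2}|$, we get $|\setIDs_{s_2}\setminus\setIDs_{a}|\ge\tfrac{3}{4}|\setGood_{s_2}|>\tfrac{3}{4}\cdot\tfrac{5}{6}|\setIDs_{s_2}|=\tfrac{5}{8}|\setIDs_{s_2}|>\tfrac{3}{5}|\setIDs_{s_2}|$, contradicting $(\star)$ at time $t=s_2$.

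Both halves of the lemma then follow. If some epoch $[s_1,s_2]$ were completely contained in $[a,b]$, then $a\le s_1$, so the claim forces $b\le s_2$, i.e.\ $b=s_2$; applying $(\star)$ at the round immediately preceding $b=s_2$ — where, because only a constant fraction of good IDs join per round, essentially the same $\tfrac{5}{8}$ lower bound holds and still beats $\tfrac{3}{5}$ — yields the needed contradiction, so no interval completely overlaps an epoch. For the count, if an interval touched three epochs, then one of the touched epochs $[s_1,s_2]$ has a strictly earlier touched epoch and a strictly later one, which forces $a\le s_1$ and $s_2\le b$, i.e.\ $[s_1,s_2]\subseteq[a,b]$ — impossible by what we just proved. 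Hence an interval touches at most two epochs.

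The hard part is making $(\star)$ fully rigorous: with departing good IDs the ratio $|\curIDs\setminus\setIDs_{a}|/|\curIDs|$ need not evolve monotonically, so one must argue that the reference chosen by the rule nonetheless cannot jump to $\setIDs_{a}$ before $b$, and one must also dispatch the degenerate boundary case $b=s_2$. Both are handled by exploiting the bounded per-round churn and the slack between $\tfrac{5}{8}$ and $\tfrac{3}{5}$. Everything else is bookkeeping; in particular, Assumptions \Aone and \Atwo are not needed for this lemma, which concerns only the combinatorial interaction of intervals and epochs under the \sgoal.
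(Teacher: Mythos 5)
Your proof is correct and takes essentially the same route as the paper's: both arguments rest on the single inequality $\frac{3}{4}\cdot\frac{5}{6}=\frac{5}{8}>\frac{3}{5}$, combining the epoch definition with the \sgoal to show that an epoch contained in an interval would have forced the interval to close before the epoch ended. The paper compresses your property $(\star)$ and the monotonicity claim into one chain of set-difference inequalities applied to the first fully contained epoch, but the underlying argument is identical.
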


\begin{proof}
Assume that some interval starts at time $t_0$ and touches at least three epochs; we will derive a contradiction. This assumption implies that there is at least one epoch entirely contained within the interval.  Consider the first such epoch, and let it start at time $t_1\geq t_0$ and end at time $t_2> t_1$. Observe that:
 \begin{eqnarray*} 
 |\setIDs_{t_2}  -\setIDs_{t_1} |  &\geq& | \setGood_{t_2} - \setGood_{t_1}| \geq \frac{3}{4}| \setGood_{t_2}| \geq \left(\frac{5}{6}\right)\left(\frac{3}{4}\right)|\setIDs_{t_2}| = \frac{5}{8}|\setIDs_{t_2}| > \frac{3}{5} |\setIDs_{t_2}| 
\end{eqnarray*}
where step 2 holds by by the definition of an epoch and step 3 by the Population Invariant. But this is a contradiction since it implies that the interval must end before time $t_2$.\qed
\end{proof}

The following lemma considers any interval, $i$, of length $\estDur$, where there are $|\curIDs|$ IDs in the system at the end of the interval.  By the \sgoal, the number of good IDs in the system is always at least $(5/6)|\curIDs|$ .  Hence,  $\frac{|\curIDs|}{\estDur}$ is within constant factors of the good join rate during the interval. 

Thus, the next lemma shows that the estimate set by the algorithm in interval $i$, $\JoinEst_i (= \frac{|\curIDs|}{\estDur})$, is within constant factors of the good join rate during interval $i$.  (See Figure~\ref{fig:estimating-JG}).


\begin{lemma}\label{lem:bounded}
Consider any interval $i \geq 1$, and let epoch $j$ be the most recent epoch that the interval touches. Then:
\end{lemma}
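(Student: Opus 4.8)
I expect the statement to assert that $\JoinEst_i$ is sandwiched between constant multiples of $\epochRate_j$, the good join rate of the most recent epoch $j$ that interval $i$ touches — something like $c_1\,\epochRate_j \le \JoinEst_i \le c_2\,\epochRate_j$ for explicit constants built from $\AOneL,\AOneH,\ATwoL,\ATwoH$. This is the technical heart behind Theorem~\ref{t:JoinEst}, which then converts "within constant factors of $\epochRate_j$" into "within constant factors of $\iJRate_i$" using Assumption~\Aone to bridge epoch $j$ to the epoch(s) overlapping iteration $i$.

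**Plan of attack.** The proof should hinge on Lemma~\ref{lem:interval-epochs}: interval $i$ touches at most two epochs and does not fully contain any single epoch. Write $\estDur$ for the interval length. First I would bound $\estDur$ from above and below in terms of the epoch lengths and good join counts. Since $|\curIDs|$ is roughly the number of IDs present (and, by the Population Invariant, at least a $(5/6)$-fraction of them are good), $\JoinEst_i = |\curIDs|/\estDur$ behaves like (number of good IDs at end of interval)$/\estDur$. The key observation is that an interval is declared exactly when $|\curIDs - \estSet| \ge (3/5)|\curIDs|$, i.e. a constant fraction of the current membership has turned over; this forces the number of \emph{good} joins within the interval to be $\Theta(|\curIDs|)$ — at least a constant fraction, because bad IDs are a minority by the Population Invariant, so the turnover cannot be explained by bad joins alone. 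So the good join count over the interval is $\Theta(|\curIDs|)$, hence the good join rate \emph{measured over the interval itself} is $\Theta(|\curIDs|/\estDur) = \Theta(\JoinEst_i)$.

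It then remains to relate the good join rate measured over the interval to $\epochRate_j$. Here is where Assumptions \Aone{} and \Atwo{} enter. By Lemma~\ref{lem:interval-epochs} the interval overlaps a suffix of epoch $j-1$ (or begins inside $j-1$) and a prefix of epoch $j$, with neither fully contained. Apply \Atwo{} to the portion of the interval lying in epoch $j$: the good join rate over that sub-period is within $[\ATwoL,\ATwoH]\cdot\epochRate_j$. Apply \Atwo{} again to the portion in epoch $j-1$ (good join rate within $[\ATwoL,\ATwoH]\cdot\epochRate_{j-1}$), and then \Aone{} to pass from $\epochRate_{j-1}$ to $\epochRate_j$ (losing another factor in $[\AOneL,\AOneH]$). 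Since the interval's good join rate is a weighted average of these two sub-period rates, it lies between $\min$ and $\max$ of the two, hence between $\ATwoL\min\{1,\AOneL\}\cdot\epochRate_j$ and $\ATwoH\max\{1,1/\AOneL\}\cdot\epochRate_j$ (roughly). Chaining this with the $\Theta(\JoinEst_i)$ equivalence from the previous step gives the claimed two-sided bound, and the constants will match the $\smallconstant,\bigconstant$ shape once the $5/6$ Population-Invariant slack and the various $\AOneL^2$, $\AOneH^2$ factors are tracked carefully.

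**Main obstacle.** The delicate part is bookkeeping the constants — in particular making sure that "number of good joins in the interval" really is bounded below by a constant fraction of $|\curIDs|$ despite the adversary's freedom to inject bad IDs and to cause good departures (which also count toward the $3/5$ turnover threshold but are \emph{not} joins). One must argue that even in the worst case the \emph{good join} component of the symmetric-difference-style turnover is $\Omega(|\curIDs|)$, using that bad IDs are $<1/6$ of the population and that departures of good IDs are, over an epoch, matched in order of magnitude by good joins via Definition~\ref{def:epoch} and Assumption~\Aone. A secondary subtlety is that the interval may start \emph{inside} epoch $j-1$ rather than at its boundary, so one cannot simply say the interval "contains epoch $j-1$"; the two-epoch bound from Lemma~\ref{lem:interval-epochs} must be used precisely to limit how much of the interval's mass can sit in an epoch whose rate differs from $\epochRate_j$. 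Everything else is a routine chain of inequalities.
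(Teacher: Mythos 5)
Your reconstruction of the statement and the overall strategy match the paper's: it too combines the interval-ending condition $|\curIDs - \estSet| \geq (3/5)|\curIDs|$ with the Population Invariant to pin the good turnover between $(3/5-1/6)|\curIDs| > (2/5)|\curIDs|$ and $(3/5)|\curIDs|$ (so $\JoinEst_i$ is within factors $5/3$ and $5/2$ of the interval's good join rate), and then uses Lemma~\ref{lem:interval-epochs} together with Assumptions \Aone and \Atwo to relate that rate to $\epochRate_j$. One side remark: the worry in your ``main obstacle'' about departures is misplaced, since $\curIDs - \estSet$ is a one-sided set difference counting only IDs that joined after $\estSet$ was recorded and are still present; subtracting the at most $(1/6)|\curIDs|$ bad IDs immediately gives the lower bound on good joins, with no need to match departures against joins via Definition~\ref{def:epoch}.

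The genuine gap is your step ``apply \Atwo{} again to the portion in epoch $j-1$.'' Assumption \Atwo only constrains periods containing at least two good join events, and the portion of the interval lying in the older epoch may contain zero or one good joins, in which case \Atwo says nothing about the rate there. The paper isolates exactly this situation as its Case 3 and handles it additively: a single stray join in the old-epoch portion changes the numerator of the interval's measured rate by at most $1$ and the denominator by at most the full interval length, costing a factor of $2$ on each side of the two-epoch (Case 2) bound. This is precisely where the lemma's final constants $\left(\frac{5}{6}\right)\AOneL\ATwoL$ and $5\,\AOneH\ATwoH$ come from, as opposed to the $\left(\frac{5}{3}\right)\AOneL\ATwoL$ and $\left(\frac{5}{2}\right)\AOneH\ATwoH$ your argument would produce. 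Without this case your chain of inequalities is incomplete and the constants would not match the statement.
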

$$\left(\frac{5}{6}\right) \AOneL \ATwoL \epochRate_j  \leq \JoinEst_i \leq 5 \AOneH \ATwoH \epochRate_j.$$
\begin{proof}

Let interval $i$ be of length $\estDur$, and assume there are $|\curIDs|$ IDs in the system at the end of the interval $i$.  Then $\JoinEst_i = \frac{|\curIDs|}{\estDur}$.  There are three cases.\smallskip

\noindent{\bf Case 1:} Interval $i$ touches only a single epoch where the epoch begins at $t_0$, the interval begins at $t_1$, and the interval ends at $t_2$. By assumption \Atwo, we have:
$$\ATwoL \epochRate_j \leq \frac{|\setGood_{t_2} - \setGood_{t_1}|}{t_2 - t_1} \leq   \ATwoH \epochRate_j$$

\noindent and by the Population Invariant and the specification of an interval: 
$$\left(\frac{2}{5}\right)|\setIDs_{t_2}| \leq \left(\frac{3}{5} - \frac{1}{6}\right)|\setIDs_{t_2} | \leq |\setGood_{t_2} - \setGood_{t_1}| \leq \left(\frac{3}{5}\right)|\setIDs_{t_2} |. $$

\noindent Therefore, we have:
$$ \left(\frac{5}{3}\right) \ATwoL \epochRate_j \leq \frac{ |\setIDs_{t_2} |}{t_2 - t_1} \leq \left(\frac{5}{2}\right) \ATwoH \epochRate_j $$
\noindent{\bf Case 2:}  The interval touches epochs $i-1$ and $i$ and there are at least $2$ good ID join events in each epoch. Let epoch $i-1$ start at time $t_0$ and end at $t_2$ (and so epoch $i$ starts at $t_2$), and let the interval start at time $t_1 \geq t_0$ and end at $t_3 \geq t_2$. By Assumptions \Aone and A2, we have:
$$ \AOneL \ATwoL \epochRate_j \leq \frac{|\setGood_{t_3} - \setGood_{t_1}|}{t_3 - t_1} \leq  \AOneH \ATwoH \epochRate_j $$

\noindent and by the Population Invariant and the specification of an interval:
$$\left(\frac{2}{5}\right)|\setIDs_{t_3}| \leq \left(\frac{3}{5} - \frac{1}{6}\right)|\setIDs_{t_3} | \leq |\setGood_{t_3} - \setGood_{t_1}| \leq \left(\frac{3}{5}\right)|\setIDs_{t_3} |. $$

\noindent Therefore, we have:
$$\left(\frac{5}{3}\right)\hspace{-2pt}\AOneL \ATwoL \epochRate_j  \hspace{-2pt}\leq \hspace{-2pt}\frac{ |\setIDs_{t_3} |}{t_3 - t_1} \hspace{-2pt}\leq\hspace{-2pt}  \left(\frac{5}{2} \right)\hspace{-2pt} \AOneH \ATwoH \epochRate_j  $$
\noindent which we refer to as the Case-2 Equation.

\smallskip
\noindent{\bf Case 3:} The interval touches epochs $j-1$ and $j$, and w.l.o.g. we have a single good join event in the portion of epoch $j-1$ that overlaps the interval; denote the length of this overlap by $\lambda'>0$.  As with Case 2, let epoch $j-1$ start at time $t_0$ and end at $t_2$ (and so epoch $j$ starts at $t_2$), and let the interval start at time $t_1 \geq t_0$ and end at $t_3 \geq t_2$. 

Observe that the single good join event in epoch $j-1$ increases the numerator of the bounded quantity in the  Case-2 Equation by $1$, and so twice the upper bound in Case 2 suffices here.  The denominator increases by $\lambda'$, where $\lambda' \leq t_3-t_1$, so half the lower bound in Case 2 suffices. This implies:
$$\left(\frac{5}{6}\right) \AOneL \ATwoL \epochRate_j  \leq \frac{ |\setIDs_{t_3} |}{t_3 - t_1} \leq  5 \AOneH \ATwoH \epochRate_j   $$
\qed
\end{proof}

\begin{lemma}\label{lem:four}
For any epoch $j\geq 1$, and any iteration $i \geq 1$ that epoch $j$ touches:
\end{lemma}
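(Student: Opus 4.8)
The plan is to bound the good‑ID join rate $\iJRate_i$ over iteration $i$ against the good‑ID join rate $\epochRate_j$ of any epoch $j$ overlapping it, using a case analysis that parallels Lemma~\ref{lem:bounded} together with Assumptions \Aone and \Atwo. The first step is a structural fact analogous to Lemma~\ref{lem:interval-epochs}: an iteration touches at most two consecutive epochs and cannot completely contain one. Indeed, iteration $i$ ends as soon as the total churn $n_i^a+n_i^d$ reaches $(1/11)|\setIDs_{i-1}|$, whereas a full epoch $j$ forces at least $(3/4)|\setGood_j|$ good IDs to join and still be present at its end, which by the Population Invariant is more than $(3/4)(5/6)|\setIDs_j|$; since within a single iteration the system size changes only by a $(1/11)$‑fraction (plus the bounded per‑round change), $(3/4)(5/6)|\setIDs_j| > (1/11)|\setIDs_{i-1}|$, so no iteration spans a whole epoch and hence it touches at most two.

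Next I would write $\iJRate_i = \bigl(g_{j-1}+g_j\bigr)\big/\bigl(\lambda_{j-1}+\lambda_j\bigr)$, where $g_{j-1},g_j$ count the good join events in the parts of iteration $i$ lying in epochs $j-1$ and $j$ and $\lambda_{j-1},\lambda_j$ are the corresponding durations (one term vanishes when the iteration lies inside a single epoch, which is the easiest case and uses \Atwo directly). For a part with at least two good joins, Assumption \Atwo gives $\ATwoL\,\epochRate_{j'}\,\lambda_{j'} \le g_{j'} \le \ATwoH\,\epochRate_{j'}\,\lambda_{j'}$; for a part with fewer than two good joins I would argue exactly as in Case~3 of Lemma~\ref{lem:bounded}, where the lone extra good join at most doubles the numerator and its duration at most halves the denominator. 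Since $\iJRate_i$ is then a weighted average of the two per‑part rates (up to those constant‑factor boundary corrections), it is pinched between $\ATwoL\min(\epochRate_{j-1},\epochRate_j)$ and $\ATwoH\max(\epochRate_{j-1},\epochRate_j)$, and I would finally invoke Assumption \Aone, $\AOneL\,\epochRate_{j-1}\le\epochRate_j\le\AOneH\,\epochRate_{j-1}$, to replace those $\min$ and $\max$ over the two adjacent epoch rates by $\epochRate_j$ times the appropriate \Aone constant, obtaining a two‑sided bound that holds for whichever of the (at most two) epochs plays the role of $j$. Collecting constants then gives the claimed inequality, which feeds together with Lemma~\ref{lem:bounded} into Theorem~\ref{t:JoinEst} (noting that the epoch the relevant interval touches is the same as, or adjacent to, an epoch that touches iteration $i$, so a further application of \Aone lines the two lemmas up).

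The main obstacle I anticipate is the bookkeeping in the straddle case: one must track carefully which of $\min/\max$ gets charged a factor $\AOneL$ versus $\AOneH$ so that the resulting bound is symmetric enough to hold for \emph{any} epoch touching the iteration (not only the most recent one), and one must fold in the factor‑of‑two corrections from epoch‑ or iteration‑boundary parts containing a single good join — the same "add one fictitious join, halve the denominator" device used in Lemma~\ref{lem:bounded} — so that the constants ultimately combine with those of Lemma~\ref{lem:bounded} to exactly the $\smallconstant$ and $\bigconstant$ of Theorem~\ref{t:JoinEst}. Degenerate parts with zero good joins, and iterations with too few good joins overall, are handled by the same truncation device to keep all ratios finite and within constant factors.
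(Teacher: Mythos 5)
Your overall skeleton matches the paper's: show the iteration can touch at most two consecutive epochs, write $\iJRate_i$ as a time-weighted average over the two epoch portions, and apply Assumption \Aone to move between adjacent epoch rates. Two differences are worth noting. First, you prove the ``at most two epochs'' fact directly from the $(1/11)|\setIDs_{i-1}|$ churn threshold versus the $(3/4)(5/6)$ growth an epoch forces; the paper instead just cites Lemma~\ref{lem:interval-epochs}, which is literally about intervals (the $3/5$ threshold), not iterations, so your direct argument is actually the more defensible one. Second, you relate the join rate on each epoch \emph{portion} to the full-epoch rate via Assumption \Atwo (plus the single-join boundary correction), whereas the paper simply asserts the exact identity $\iJRate_i=\lambda_1\epochRate_{j-1}+\lambda_2\epochRate_j$ with $\lambda_1+\lambda_2=1$ and uses only \Aone. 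You are being more careful on a point the paper glosses over.

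The genuine problem is that your route does not deliver the lemma as stated. The statement is $\AOneL\,\iJRate_i\le\epochRate_j\le\AOneH\,\iJRate_i$, with constants depending only on \Aone; your pinching through \Atwo and the factor-of-two boundary corrections yields bounds of the form $c\,\ATwoL\AOneL\,\epochRate_j\le\iJRate_i\le c'\,\ATwoH\AOneH\,\epochRate_j$, i.e.\ a constant-factor relationship with strictly worse constants involving $\ATwoL$, $\ATwoH$, and a factor of $2$. Since these constants are propagated explicitly into $\smallconstant$ and $\bigconstant$ in Theorem~\ref{t:JoinEst}, your version would force those to be recomputed; as written, the proposal proves a weaker inequality than the one named in the lemma. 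To recover the paper's constants you would have to adopt its convention that the portion rates equal the epoch rates exactly (or redefine what $\epochRate_j$ contributes on a partial overlap), which is precisely the step you were—reasonably—unwilling to take for free.
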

$$\AOneL \iJRate_{i} \leq \epochRate_{j} \leq \AOneH \iJRate_{i}$$
\begin{proof}
\noindent By Lemma~\ref{lem:interval-epochs}, iteration $j$ touches at most two epochs, say epochs $j-1$ and $j$. Thus, by definition of $\iJRate_{i}$, for some $\lambda_1, \lambda_2 \geq 0$, $\lambda_1+\lambda_2 = 1$:
$$\iJRate_{i} = \lambda_{1} \epochRate_{j-1} + \lambda_{2} \epochRate_{j} \leq  \lambda_{1} \left(\frac{\epochRate_{j}}{ \AOneL}\right) + \lambda_{2} \epochRate_{j} \leq  \frac{\epochRate_{j}}{ \AOneL}$$
The second step above holds by assumption \Aone.  The last step holds since $\lambda_1 + \lambda_{2} = 1$ and  $\AOneL\leq 1$.  A similar derivation yields that $\iJRate_{i}  \geq  \frac{\epochRate_{j}}{\AOneH}$. Together this implies that:
$ \AOneL \iJRate_{i} \leq \epochRate_{j} \leq \AOneH \iJRate_{i}.$
\qed
\end{proof}

\noindent We can now prove Theorem~\ref{t:JoinEst}, restate below.

\begin{customthm}{1}
For any iteration $i\geq 1$, \EstGoodJoin provides a constant-factor estimate of  the good-ID join rate: 
$$ \smallconstant \iJRate_{i} \leq \hspace{-1pt}\JoinEst_i \hspace{-3pt} \leq \bigconstant \iJRate_{i}$$
where $\smallconstant = \left(\frac{5}{6}\right) \cfrac{\AOneL^2 \ATwoL}{\AOneH} $ and $\bigconstant = \cfrac{5\,\AOneH^2 \ATwoH}{\AOneL}$.
\end{customthm}

\begin{proof}
The estimate $\JoinEst_i$ used in iteration $i$ corresponds to the most recent interval that completed before iteration $i$ started. Let epoch $j$ be the most recent epoch that touches this interval. By Lemma~\ref{lem:bounded}:
$$\left(\frac{5}{6}\right) \AOneL \ATwoL \epochRate_j \leq\JoinEst_i \leq 5\,\AOneH \ATwoH \epochRate_j.$$

Epoch $j$ may end prior to the start of iteration $i$; that is, epoch $j$ may not necessarily touch iteration $i$. In this case, note that by Lemma~\ref{lem:interval-epochs}, the current interval touches epoch $j+1$ and must end before epoch $j+1$ ends. This fact, along with the observation that the current interval touches iteration $i$, implies that epoch $j+1$  touches iteration $i$. 

By the above, we know that either epoch $j+1$ or epoch $j$ touches iteration $i$. Lemma~\ref{lem:four} implies:
$$ \AOneL \iJRate_{i} \leq \epochRate_{j} \leq \AOneH \iJRate_{i}$$ 
or
$$ \AOneL \iJRate_{i} \leq \epochRate_{j+1} \leq \AOneH \iJRate_{i}.$$

\noindent To employ our top-most equation, we use \Aone to derive:
$$ \epochRate_{j+1}/\AOneH \leq \epochRate_{j} \leq  \epochRate_{j+1}/\AOneL  $$
\noindent and then plug into our top-most equation, we have:
$$\left(\frac{5}{6}\right) \frac{\AOneL^2 \ATwoL}{\AOneH}  \iJRate_{i} \leq\hspace{-1pt}\JoinEst_i \hspace{-3pt}\leq \frac{5\,\AOneH^2 \ATwoH}{\AOneL}  \iJRate_{i}$$
\qed
\end{proof}

\smallskip

\subsection{Proof of Theorem~\ref{thm:new-main-upper}}\label{sec:appendasym}
We make use of the following algebraic fact that follows from the Cauchy-Schwartz inequality.

\begin{lemma}\label{l:cs2}
Let $n$ be a positive number, and for all $1 \leq i \leq n$, $s_i \geq 0$ and let $S = \sum_{i=1}^n s_i$.  Then
$$ \sum_{i=1}^n s_i^2 \geq S^2/n$$
\end{lemma}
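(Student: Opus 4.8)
\textbf{Proof proposal for Lemma~\ref{l:cs2}.}

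The plan is to apply the Cauchy--Schwarz inequality to the vectors $(s_1, s_2, \ldots, s_n)$ and $(1, 1, \ldots, 1)$. First I would write down Cauchy--Schwarz in the form
$$\left( \sum_{i=1}^n s_i \cdot 1 \right)^2 \leq \left( \sum_{i=1}^n s_i^2 \right) \left( \sum_{i=1}^n 1^2 \right).$$
The left-hand side is exactly $S^2$, and the second factor on the right-hand side is exactly $n$. Rearranging gives $\sum_{i=1}^n s_i^2 \geq S^2/n$, which is the claim. The nonnegativity hypothesis $s_i \geq 0$ is not even strictly needed for this inequality (Cauchy--Schwarz holds for all reals), but it is the natural regime in the application, so I would leave it as stated.

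Alternatively, one could argue via convexity (Jensen's inequality applied to $x \mapsto x^2$): the average of the squares is at least the square of the average, i.e. $\frac{1}{n}\sum s_i^2 \geq \left(\frac{1}{n}\sum s_i\right)^2 = S^2/n^2$, and multiplying through by $n$ yields the result. A third route is the power-mean (QM--AM) inequality, which is really the same statement. Any of these is a one-line derivation.

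There is essentially no obstacle here; the only thing to be careful about is matching up the two factors correctly so that the ``$n$'' appears in the denominator rather than the numerator, and noting that equality holds precisely when all $s_i$ are equal (though equality is not needed for the application). I would simply state the Cauchy--Schwarz derivation in two lines and move on, since the lemma is a standard tool invoked later (presumably to lower-bound the sum of squared per-iteration costs by the square of the total cost over the number of iterations, which feeds into the $\sqrt{T}$-type bound in Theorem~\ref{thm:new-main-upper}).
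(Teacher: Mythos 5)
Your proof is correct and follows exactly the same route as the paper: Cauchy--Schwarz applied to the vectors $(s_1,\ldots,s_n)$ and $(1,\ldots,1)$, yielding $S^2 \leq n\sum_i s_i^2$. The additional remarks (Jensen/QM--AM alternatives, equality case, nonnegativity being unnecessary) are accurate but not needed.
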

\begin{proof}

Let $u$ be a vector of length $n$ with for all $1 \leq i \leq n$, $u[i] = s_i$, and let $v$ be a vector of length $n$ with, for all $1 \leq i \leq n$, $v[i] = 1$.  Then by Cauchy-Schwartz:
\begin{eqnarray*}
	|\langle u, v \rangle|^2 & \leq & \langle u,u \rangle \cdot \langle v, v \rangle \\
	S^2  & \leq & \left(\sum_{i=1}^n s_i^2 \right) \cdot n
\end{eqnarray*}
Rearranging completes the proof.\qed
\end{proof}

\noindent Let {\boldmath{$\advCost_i$}} denote the cost to the adversary over iteration $i$ divided by the length of iteration $i$. Let {\boldmath{$\joinRateBad_{i}$}} be the join rate of bad IDs during iteration $i$. Recall that $\bigconstant \iJRate_{i} = \frac{ 5 \AOneH^2 \ATwoH}{\AOneL}$.


\begin{lemma}\label{l:joinBad}
For any iteration $i>1$, 
$$\joinRateBad_i \leq d_1 \sqrt{T_i (J^G_i+1)}$$ where $d_1 = \sqrt{2\bigconstant}$.
\end{lemma}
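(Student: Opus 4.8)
The plan is to lower-bound the adversary's spending rate $T_i$ over iteration $i$ in terms of the number $b_i$ of bad IDs that join during that iteration, then convert to join rates and invoke Theorem~\ref{t:JoinEst}. Throughout, let $\ell_i$ be the length in seconds of iteration $i$, so $\joinRateBad_i = b_i/\ell_i$, and let $\tau = 1/\JoinEst_i$ be the look-back window used by the \tog entrance rule during iteration $i$ (so $\JoinEst_i$, and hence $\tau$, is fixed for the duration of iteration $i$). If $b_i=0$ the claim is trivial, so assume $b_i \geq 1$. First I would partition iteration $i$ into $W = \lceil \ell_i/\tau\rceil$ consecutive windows, each of length at most $\tau$ seconds.

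The crux is a counting argument inside a single window. Fix a window, let $b_w$ be the number of bad IDs that join in it, and order those joins by time. When the $k$-th of them joins, all $k$ of these bad IDs (including itself) have joined within this one window, hence within a span of at most $\tau$ seconds, and all within iteration $i$; so by the entrance rule in Figure~\ref{alg:gmcom} this ID's puzzle has difficulty at least $k$, costing the adversary at least $k$. Thus the adversary pays at least $\sum_{k=1}^{b_w} k \geq b_w^2/2$ for that window's bad joins, and summing over all $W$ windows the adversary's total cost during iteration $i$ is at least $\tfrac12\sum_{w=1}^{W} b_w^2$. Applying Lemma~\ref{l:cs2} to the $b_w$'s, whose sum is $b_i$, this total is at least $b_i^2/(2W)$, and since $W = \lceil \ell_i\JoinEst_i\rceil \leq \ell_i\JoinEst_i + 1$ we obtain
$$T_i \;\geq\; \frac{b_i^2}{2W\ell_i} \;\geq\; \frac{b_i^2}{2\ell_i(\ell_i\JoinEst_i + 1)}.$$

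Dividing by $\ell_i^2$ rewrites this in terms of rates as $(\joinRateBad_i)^2 \leq 2\bigl(\JoinEst_i + \tfrac{1}{\ell_i}\bigr)T_i$. Since an iteration contains a full purge round, $\ell_i \geq 1$ second, so $1/\ell_i \leq 1$; and Theorem~\ref{t:JoinEst} gives $\JoinEst_i \leq \bigconstant\,\iJRate_i = \bigconstant J^G_i$ with $\bigconstant \geq 1$. Hence $\JoinEst_i + 1/\ell_i \leq \bigconstant(J^G_i + 1)$, so $(\joinRateBad_i)^2 \leq 2\bigconstant(J^G_i + 1)T_i$, and taking square roots yields $\joinRateBad_i \leq \sqrt{2\bigconstant}\,\sqrt{T_i(J^G_i + 1)} = d_1\sqrt{T_i(J^G_i + 1)}$, as claimed.

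The step I expect to require the most care is the per-window count: the entrance difficulty is governed by a \emph{sliding} window (``the last $1/\JoinEst_i$ seconds''), not a fixed grid, and IDs joining near an iteration boundary look back only to the start of the current iteration. Both issues are sidestepped by the fixed-grid partition, since any two joins lying in the same grid window are automatically within $\tau$ seconds of each other and within the iteration, so each earlier bad join in the window genuinely contributes to a later one's puzzle difficulty. The only other subtlety is the rounding in $W = \lceil\ell_i\JoinEst_i\rceil$: this is precisely what the ``$+1$'' in $J^G_i+1$ absorbs, via $1/\ell_i \leq 1$, so I would be careful to carry that term rather than pretend $\ell_i\JoinEst_i$ is an integer. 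Note also that this argument uses no assumption on $\alpha$, since it merely translates adversarial spending into bad joins.
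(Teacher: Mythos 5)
Your proof is correct and follows essentially the same route as the paper's: partition the iteration into windows of length $1/\JoinEst_i$, charge the $k$-th bad join in a window at least $k$, sum to get $\tfrac12\sum_w b_w^2$, apply Lemma~\ref{l:cs2}, and finish with Theorem~\ref{t:JoinEst}. The only differences are presentational — the paper normalizes $\ell_i=1$ where you carry $\ell_i$ explicitly — and you are in fact somewhat more careful on the two points the paper glosses over (the sliding-window versus fixed-grid counting, and absorbing the $1/\ell_i$ rounding term into $\bigconstant(J^G_i+1)$).
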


\begin{proof}

For simplicity, we normalize time units so that $\ell_i=1$.  Partition iteration $i$ from left to right into sub-iterations, all of length $1/\JoinEst_i$, except the last, which is of length at most $1/\JoinEst_i$.  We lower bound the cost paid by the adversary for joins by pessimistically assuming that only bad IDs are counted when computing entrance costs. For  $1 \leq x \leq \lceil \JoinEst_i  \rceil$, let $j_x$ be the total number of bad IDs that join in sub-iteration $x$.  Since $\sum_{y=1}^{j_x} y = (j_x + 1)j_x/2 \geq (j_x)^2/2$,  the total entrance cost paid by bad IDs is at least $(1/2) \sum_{x = 1}^{\lceil \JoinEst_i  \rceil} (j_x)^2$. 
 
\noindent zSince $\sum_{x = 1}^{\lceil \JoinEst_i  \rceil} j_x = \joinRateBad_i$,   by applying Lemma~\ref{l:cs2}, we have:
$$T_i \geq \frac{1}{2} \sum_{x = 1}^{\lceil \JoinEst_i  \rceil} (j_x)^2 \geq \frac{(\joinRateBad_i)^2}{2 (\lceil \JoinEst_i  \rceil)} $$	
Cross-multiplying and taking the square root, we get:
\begin{eqnarray*}
	\joinRateBad_i & \leq & \sqrt{2 T_i \lceil \JoinEst_i  \rceil} \leq \sqrt{2 T_i (\JoinEst_i +1)} \leq \sqrt{2  T_i \left(\bigconstant \iJRate_{i} + 1\right)} 
\end{eqnarray*}
where the second step follows from noting that $\lceil x \rceil \leq x +1$ for all $x$, and the final step follows from Lemma~\ref{t:JoinEst} which states that: 
$$\JoinEst_i  \leq  \frac{ 5 \AOneH^2 \ATwoH}{\AOneL}  \iJRate_{i} = \bigconstant \iJRate_{i} $$
\noindent which yields the lemma statement. \qed
\end{proof}

\begin{lemma} \label{l:algCost}
		Let $\mathcal{A}_i$ be the average spend rate for the algorithm in any iteration $i>1$.  Then, $\mathcal{A}_i \leq \frac{d_2|S_{i-1}|}{\ell_i}$, where $d_2 = \left( \frac{12}{11} + \frac{\AOneH \ATwoH}{11\smallconstant} \right)$.
\end{lemma}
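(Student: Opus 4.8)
The plan is to bound the algorithm's total spend during iteration $i$ by separating it into the \emph{purge cost} (the difficulty-$1$ puzzles of Step 2(b)) and the \emph{entrance cost} paid by good IDs in Step 1, showing each is at most a constant times $|S_{i-1}|$, and then dividing by $\ell_i$. The constant $\tfrac{12}{11}$ in $d_2$ will come from the purge, and $\tfrac{\AOneH\ATwoH}{11\smallconstant}$ from the entrance cost.

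For the purge: iteration $i$ ends the first time $n_i^a + n_i^d \geq (1/11)|S_{i-1}|$, so up to that moment $n_i^a \leq (1/11)|S_{i-1}|$, and hence the number of IDs in the system when the difficulty-$1$ purge puzzle is issued is at most $|S_{i-1}| + n_i^a - n_i^d \leq (12/11)|S_{i-1}|$. Since $\mathcal{A}$ only charges good IDs, of which there are no more than that, the purge contributes at most $(12/11)|S_{i-1}|$ to the iteration's spend. (The slight overshoot possibly caused by the batch that triggers the purge is absorbed by the join/departure assumption and does not change the constant; I would note this but not dwell on it.)

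For the entrance cost I would use a double-counting identity: the total entrance cost paid by good IDs in iteration $i$ equals $\sum_w\bigl(\text{number of good IDs joining within } 1/\JoinEst_i \text{ seconds after } w\bigr)$, where $w$ ranges over all IDs (good or bad) joining in iteration $i$ — this holds because $w$ lies within $1/\JoinEst_i$ seconds before a good ID $v$ iff $v$ lies within $1/\JoinEst_i$ seconds after $w$. So it suffices to show every window of length $1/\JoinEst_i$ inside iteration $i$ contains at most $\AOneH\ATwoH/\smallconstant$ good joins, since then the entrance cost is at most $(\AOneH\ATwoH/\smallconstant)\,n_i^a \leq (\AOneH\ATwoH/(11\smallconstant))|S_{i-1}|$. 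To prove the per-window bound: if the window has at most one good join it is immediate (note $\AOneH\ATwoH/\smallconstant \geq 1$); otherwise Assumption \Atwo bounds the good join rate over the window by $\ATwoH\,\epochRate_j$ for the epoch $j$ it lies in, Lemma~\ref{lem:four} gives $\epochRate_j \leq \AOneH\,\iJRate_i$ (epoch $j$ touches iteration $i$), and Theorem~\ref{t:JoinEst} gives $\JoinEst_i \geq \smallconstant\,\iJRate_i$, so the number of good joins in the window is at most $\ATwoH\epochRate_j/\JoinEst_i \leq \AOneH\ATwoH/\smallconstant$; if the window straddles two consecutive epochs I would split it at the boundary and apply \Atwo to each piece, just as in Cases 2--3 of Lemma~\ref{lem:bounded}.

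Adding the two pieces gives iteration-$i$ spend at most $\bigl(\tfrac{12}{11} + \tfrac{\AOneH\ATwoH}{11\smallconstant}\bigr)|S_{i-1}| = d_2|S_{i-1}|$, and dividing by $\ell_i$ yields the claim. The main obstacle I expect is the bookkeeping around windows: making the per-window good-join bound survive the two-epoch straddle case and the case of too few good joins for \Atwo to apply, all with the stated constant; a secondary nuisance is confirming that the batch which triggers the purge does not break the $12/11$ factor.
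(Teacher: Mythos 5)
Your proof is correct and arrives at the same constant $d_2$, but by a genuinely different decomposition of the cost, and the comparison is instructive. The paper charges the purge only $|S_{i-1}|$ and bounds the entrance cost \emph{per good joiner} by $1 + \AOneH\ATwoH/\smallconstant$ (one for itself plus the joins in its $1/\JoinEst_i$-window), multiplying by the at most $|S_{i-1}|/11$ join events; its $12/11$ is assembled as $1$ from the purge plus $1/11$ from the per-joiner ``$+1$''. You instead put the entire $12/11$ into the purge, by noting that the population at purge time may have grown to $(12/11)|S_{i-1}|$ --- which is in fact the more honest accounting of the purge, since the paper's flat $|S_{i-1}|$ understates it --- and you bound the entrance cost by a double-counting identity over \emph{all} join events $w$, good and bad, of the good joiners landing within $1/\JoinEst_i$ after $w$. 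That swap is the more robust argument for the entrance term: the puzzle difficulty a good ID pays is driven by every ID in its window, including adversarial ones, and your identity charges each such contribution to (per-window good-join bound) $\times$ (total join events $\leq |S_{i-1}|/11$), whereas the paper's per-good-ID bound silently caps the window's join count by its \emph{good}-join count and leaves the reader to reconcile adversarial joins in the window. The per-window bound of $\AOneH\ATwoH/\smallconstant$ good joins is derived identically in both arguments (Assumption \Atwo, Lemma~\ref{lem:four}, Theorem~\ref{t:JoinEst}), and both treat the epoch-straddling and fewer-than-two-good-joins corner cases with the same brevity, so your sketch introduces no gap the paper does not already tolerate; if anything, making the straddle case explicit (bounding both touched epochs via Lemma~\ref{lem:four} rather than chaining through \Aone) is worth writing out, since chaining through \Aone would cost an extra $1/\AOneL$ factor that neither proof can afford.
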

\begin{proof}
For simplicity, we first normalize time units so that $\ell_i=1$.  Partition iteration $i$ from left to right into sub-iterations, all of length $1/\joinRate_i$, except the last, which is of length at most $1/\joinRate_i$.  

The spend rate for the algorithm due to purge costs is  $|S_{i-1}|$.  

For entrance costs, note the following two facts.  First, by Assumptions A1 and A2, there are at most $\AOneH \ATwoH $ good IDs in any sub-iteration (note that a sub-iteration might span two epochs). Second,  the entrance cost for any good ID is $1$ plus the number of join events over the past $1/\JoinEst_i$ seconds. By Theorem~\ref{t:JoinEst}, $1/\JoinEst_i \leq 1/(\smallconstant\joinRate_i)$, and so the entrance cost is at most $1$ plus the number of join events over the past $1/\smallconstant$ sub-iterations; that is $1 + \AOneH \ATwoH/\smallconstant$.

By these two facts, and by the fact that there are at most $|S_{i-1}|/11$ join events in an iteration, the total entrance costs paid by good IDs in iteration $i$ is at most:
$$\left(1 + \frac{\AOneH \ATwoH}{\smallconstant}\right)\left(\frac{|S_{i-1}|}{11}\right).$$  

Adding the bounds for both entrance and purge costs and dividing by the value of $\ell_i$ yields a total cost of at most:

$$ \frac{|S_{i-1}|}{\ell_i}\left( \frac{12}{11} + \frac{\AOneH \ATwoH}{11\smallconstant} \right)$$
\qed
\end{proof}

Let {\boldmath{$\Iters$}} be any subset of iterations that for integers $x$ and $y$, $1 \leq x \leq y$, contains every iteration with index between $x$ and $y$ inclusive.  Let {\boldmath{$\delta(\Iters)$}} be $|S_{x} - S_y|$; and let $\Delta(\Iters)$ be  $\delta(\Iters)$ divided by the length of $\Iters$. Let {\boldmath{$\mathcal{A}_{\Iters}$}} and {\boldmath{$T_{\Iters}$}} be the algorithmic and adversarial spend rates over $\Iters$; and let {\boldmath{$\iJRate_{\Iters}$}} be the good join rate over all of $\Iters$.  Then we have the following lemma.

\begin{lemma}\label{l:cost}
   	For any subset of contiguous iterations, $\Iters$,which starts after iteration $1$, the algorithmic spending rate over $\Iters$ is at most:
$$11d_2  \left(  2\Delta(\Iters)  +  d_1\sqrt{2T_{\Iters} (\bigconstant \iJRate_{\Iters} + 1)}  + \iJRate_{\Iters}\right).$$
\end{lemma}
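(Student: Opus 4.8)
The plan is to aggregate the per-iteration estimates of Lemmas~\ref{l:joinBad} and~\ref{l:algCost} over the iterations of $\Iters$ and then normalize by the total length $|\Iters|$. Since $\Iters$ begins after iteration $1$, Lemma~\ref{l:algCost} applies to every $i\in\Iters$, so $\mathcal{A}_i\ell_i\le d_2|S_{i-1}|$. I would then invoke the termination rule for an iteration: iteration $i$ ends at the first instant at which $n_i^a+n_i^d\ge |S_{i-1}|/11$, hence at its end $|S_{i-1}|\le 11(n_i^a+n_i^d)$. Summing over $\Iters$, $\mathcal{A}_{\Iters}|\Iters|=\sum_{i\in\Iters}\mathcal{A}_i\ell_i\le 11d_2\sum_{i\in\Iters}(n_i^a+n_i^d)$, so it suffices to bound the total number of join and departure events occurring during $\Iters$.

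For the join events, I would split $\sum_i n_i^a$ into good and bad joins. The good joins total exactly $\iJRate_{\Iters}|\Iters|$ by definition of $\iJRate_{\Iters}$. For the bad joins, the count in iteration $i$ is $\joinRateBad_i\ell_i$, and Lemma~\ref{l:joinBad} gives $\joinRateBad_i\ell_i\le d_1\sqrt{(T_i\ell_i)\bigl((\iJRate_i+1)\ell_i\bigr)}$. I would then apply the Cauchy--Schwarz inequality in the form $\sum_i\sqrt{a_ib_i}\le \sqrt{\sum_i a_i}\,\sqrt{\sum_i b_i}$ with $a_i=T_i\ell_i$ (the adversary's cost in iteration $i$, summing to $T_{\Iters}|\Iters|$) and $b_i=(\iJRate_i+1)\ell_i$ (summing to $(\iJRate_{\Iters}+1)|\Iters|$), which bounds the total bad joins by $d_1|\Iters|\sqrt{T_{\Iters}(\iJRate_{\Iters}+1)}\le d_1|\Iters|\sqrt{2T_{\Iters}(\bigconstant\iJRate_{\Iters}+1)}$, using $\bigconstant\ge 1$.

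The step I expect to be the main obstacle is bounding the total number of departure events $\sum_i n_i^d$. My plan here is a conservation (flow) argument: since good IDs never leave the whitelist via a purge, the system size changes only through joins and departures, so writing $x$ and $y$ for the first and last iteration indices of $\Iters$, the net change satisfies $|S_x|-|S_y|=(\text{departures})-(\text{joins})$ and $|S_x|-|S_y|\le|S_x\setminus S_y|=\delta(\Iters)$, giving $(\text{total departures})\le(\text{total joins})+\delta(\Iters)$. The delicate points will be: IDs that depart and later rejoin (each rejoin is itself a counted join event, so the charging still closes but one must avoid double counting), the effect of purges on which IDs belong to $S_i$, and the fact that $\delta(\Iters)=|S_x\setminus S_y|$ is a one-sided difference rather than a symmetric difference. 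Producing the exact constants of the statement---in particular the coefficient $2$ on $\Delta(\Iters)$---will require carrying out this bookkeeping carefully and folding the resulting slack into the $\Delta(\Iters)$, $\iJRate_{\Iters}$, and square-root terms.

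Finally I would collect the three contributions: $\sum_{i\in\Iters}(n_i^a+n_i^d)=(\text{joins})+(\text{departures})$ is bounded by a constant-weighted combination of $\iJRate_{\Iters}|\Iters|$, the quantity $d_1|\Iters|\sqrt{2T_{\Iters}(\bigconstant\iJRate_{\Iters}+1)}$, and $\delta(\Iters)$; dividing by $|\Iters|$ and multiplying by $11d_2$ then yields $\mathcal{A}_{\Iters}\le 11d_2\bigl(2\Delta(\Iters)+d_1\sqrt{2T_{\Iters}(\bigconstant\iJRate_{\Iters}+1)}+\iJRate_{\Iters}\bigr)$, which is the claimed bound.
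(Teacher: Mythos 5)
Your proposal is correct and follows essentially the same route as the paper: aggregate Lemma~\ref{l:algCost} over $\Iters$, use the purge trigger to get $|S_{i-1}|\le 11(n_i^a+n_i^d)$, bound the bad joins by combining Lemma~\ref{l:joinBad} with Cauchy--Schwarz across iterations, and charge each departure either to $S_x\setminus S_y$ or to a join event. The departure step you flag as the main obstacle is dispatched in the paper by exactly the one-line charging argument you describe (departures $\le$ joins $+\,\delta(\Iters)$), with the resulting factor-of-two slack absorbed into the stated constants just as you anticipate.
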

\begin{proof}
\noindent Let {\boldmath{$\depRateTot_{i}$}} be the departure rate of both good and bad IDs during iteration $i$.  By Lemma~\ref{l:algCost} and the assumptions for when a purge occurs, we have that:
\begin{eqnarray*}
	\sum_{i \in \Iters} \mathcal{A}_i \ell_i & \leq &d_2\sum_{i \in \Iters} |S_{i-1}| \\
	&\leq & 11d_2\sum_{i \in \Iters} (\depRateTot_i + \joinRateBad_i + \iJRate_{i}) \ell_i \\
	& \leq & 11d_2\Bigg( 2 \delta(\Iters) +  \sqrt{ \sum_{i \in \Iters} 2T_i \ell_i  \sum_{i \in \Iters} (\bigconstant \iJRate_i + 1) \ell_i } + \sum_{i \in \Iters} \iJRate_{i} \ell_i\Bigg) 
\end{eqnarray*}

The second line in the above follows from the fact that every ID that departs must have departed from the set of IDs in the system at the start of $\Iters$ or else must have been an ID that joined during $\Iters$.  The last line follows from the Lemma~\ref{l:joinBad} bound on $\joinRateBad_i$, and noting that $\ell_i = \sqrt{\ell^2_i}$.  Finally, the last line follows from Cauchy-Schwarz Inequality.

\smallskip

Dividing both sides of the above inequality by $\sum_{i \in \Iters} \ell_i$ and recalling that $d_1 = \sqrt{2\bigconstant}$, completes the proof. \qed
\end{proof}
 
\noindent The above result is more general than we need, but \textbf{Theorem~\ref{thm:new-main-upper} follows from Lemma~\ref{l:cost}} by  noting that $\Delta(\Iters)=0$ when $\Iters$ is all iterations, since the system is initially empty. We state an interesting corollary implied by Theorem \ref{thm:new-main-upper}.

\begin{corollary}\label{cor:main-upper}
For $\alpha \leq 1/18$, with error probability polynomially small in $n_0$ over the system lifetime, and for any subset of contiguous iterations $\mathcal{I}$, \tog has an algorithmic spending rate of $O\left(\sqrt{\advAveCost_{\mathcal{I}}\,(\joinRate_{\Iters}+1)} + \Delta(\Iters)  + \iJRate_{\Iters}\right)$. 
\end{corollary}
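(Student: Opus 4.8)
The plan is to read the corollary off directly from Lemma~\ref{l:cost}, which already supplies the explicit bound
$$11d_2\left(2\Delta(\Iters) + d_1\sqrt{2T_{\Iters}(\bigconstant\iJRate_{\Iters}+1)} + \iJRate_{\Iters}\right)$$
on the algorithmic spending rate over any contiguous block $\Iters$ of iterations. The first step is to observe that every multiplicative factor occurring here is a genuine constant: $\bigconstant = 5\AOneH^2\ATwoH/\AOneL$, $d_1 = \sqrt{2\bigconstant}$, and $d_2 = \tfrac{12}{11} + \AOneH\ATwoH/(11\smallconstant)$ with $\smallconstant = \tfrac{5}{6}\AOneL^2\ATwoL/\AOneH$, so all of these depend only on the fixed constants $\AOneL,\AOneH,\ATwoL,\ATwoH$ from Assumptions \Aone and \Atwo; in particular none of them grows with $n_0$ or with the system size.

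Given that, the argument is just to split the Lemma~\ref{l:cost} bound into its three summands and absorb constants. The term $22d_2\,\Delta(\Iters)$ is $O(\Delta(\Iters))$. For the middle term, I would use the elementary inequality $\bigconstant\iJRate_{\Iters}+1 \le (\bigconstant+1)(\iJRate_{\Iters}+1)$, valid since $\iJRate_{\Iters},\bigconstant \ge 0$, to get $11d_1d_2\sqrt{2T_{\Iters}(\bigconstant\iJRate_{\Iters}+1)} \le 11d_1d_2\sqrt{2(\bigconstant+1)}\,\sqrt{T_{\Iters}(\iJRate_{\Iters}+1)} = O(\sqrt{T_{\Iters}(\iJRate_{\Iters}+1)})$. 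The last term, $11d_2\,\iJRate_{\Iters}$, is $O(\iJRate_{\Iters})$. Adding the three estimates gives the claimed $O(\sqrt{T_{\Iters}(\iJRate_{\Iters}+1)} + \Delta(\Iters) + \iJRate_{\Iters})$.

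Two minor points then close the argument. First, the ``error probability polynomially small in $n_0$'' qualifier: Lemma~\ref{l:cost} rests on Lemmas~\ref{l:joinBad} and \ref{l:algCost}, which invoke Theorem~\ref{t:JoinEst} and the Population Invariant, and each of those holds w.h.p.\ (probability at least $1 - 1/n_0^{\gamma}$) over the system lifetime; the corollary simply inherits this failure probability. Second, Lemma~\ref{l:cost} is stated only for subsets $\Iters$ beginning after iteration $1$, so to cover an arbitrary subset one treats iteration $1$ separately: since the system starts empty and the committee is installed by the one-time \genID initialization, the cost charged during iteration $1$ is at most a constant times $|S_0|$, which is subsumed by the $\Delta(\Iters) + \iJRate_{\Iters}$ terms (alternatively, one restricts to subsets starting after iteration $1$, exactly as in the passage that derives Theorem~\ref{thm:new-main-upper}).

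I do not expect any real obstacle: the corollary is essentially Lemma~\ref{l:cost} with the explicit Assumption constants hidden inside big-$O$ notation. The only thing that deserves a sentence of care is confirming that $d_1$, $d_2$, and $\bigconstant$ are constants rather than quantities that could secretly depend on $n_0$ — which is immediate from their definitions — together with the cosmetic handling of iteration $1$.
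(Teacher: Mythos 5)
Your proposal is correct and matches the paper's (implicit) argument: the paper treats this corollary as an immediate consequence of Lemma~\ref{l:cost}, exactly as you do, with the explicit constants $d_1$, $d_2$, $\bigconstant$ absorbed into the big-$O$ and the high-probability qualifier inherited from the underlying lemmas. Your extra care about the middle term (via $\bigconstant\iJRate_{\Iters}+1 \leq (\bigconstant+1)(\iJRate_{\Iters}+1)$) and about iteration $1$ only makes explicit what the paper leaves unstated.
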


\noindent This shows that the spending rate for the algorithm remains small, even when focusing on just a subset of iterations.  To understand why this is important, consider a long-lived system which suffers a single, significant attack for a small number of iterations, after which there are no more attacks. The cost of any defense may be small when amortized over the lifetime of the system, but this does not give a useful guarantee on performance during the time of attack.

\subsection{\algGM with close Join Events}\label{app:GMComFailure}

\begin{wrapfigure}[14]{r}{0.5\textwidth}
    \centering
    \includegraphics[trim = 1.8cm 7.5cm 1.8cm 10cm, width = 0.48\textwidth]{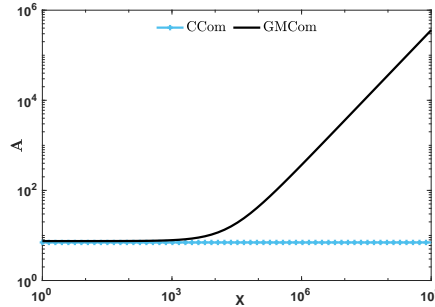}
    \vspace{-5pt}
    \caption{\small Comparison of algorithmic spend rate for \AlgB~and \algGM as join event grows closer to beginning of an epoch.}
    \label{fig:gmcomFail}
\end{wrapfigure}
   
\medskip   

\textbf{Experimetal Setup.} We simulate \AlgB and \algGM over a system that always consists of $10,000$ good IDs. During the lifetime of the system, new good IDs join/depart at a constant rate of one ID per time step. The departing ID is chosen uniformly at random from the set of IDs that joined before the current iteration. We run for 2 iterations and then have a single join event in the third iteration.  This last join event occurs arbitrarily close to second to last join event which ended the previous iteration.

Let the time between the second to last and the last join event be $1/X$, for some value $X >0$.   Thus, as $X$ increases, these two last join events become closer together. For $X \in \{2^0,2^1,...,2^{30}\}$, we compute the spend rate for \AlgB and \algGM. 

\smallskip

\noindent Figure \ref{fig:gmcomFail} illustrates our  results. As can be seen, as $X$ increases, the algorithmic spend rate, $\mathcal{A}$, increases linearly for \algGM, whereas for \AlgB it remains constant.  This increase in the spend rate for \algGM occurs solely because of the increasing entrance cost for the very last good join event.

\subsection{Plots from Experiments of Section~\ref{s:JandLAssum}}\label{app:plots}
\smallskip

\noindent In Figure~\ref{fig:BitcoinAssumptions}, we present the plots from our experiments testing Assumptions \Aone and \Atwo. These are used to derive the values presented in Table~\ref{tab:assumptions}.

\begin{figure*}
\captionsetup[subfigure]{labelformat=empty}
\centering
\begin{subfigure}{0.45\textwidth}
	\centering
	\includegraphics[trim = 1.25cm 7.5cm 1.25cm 7cm, width = 0.8\textwidth]{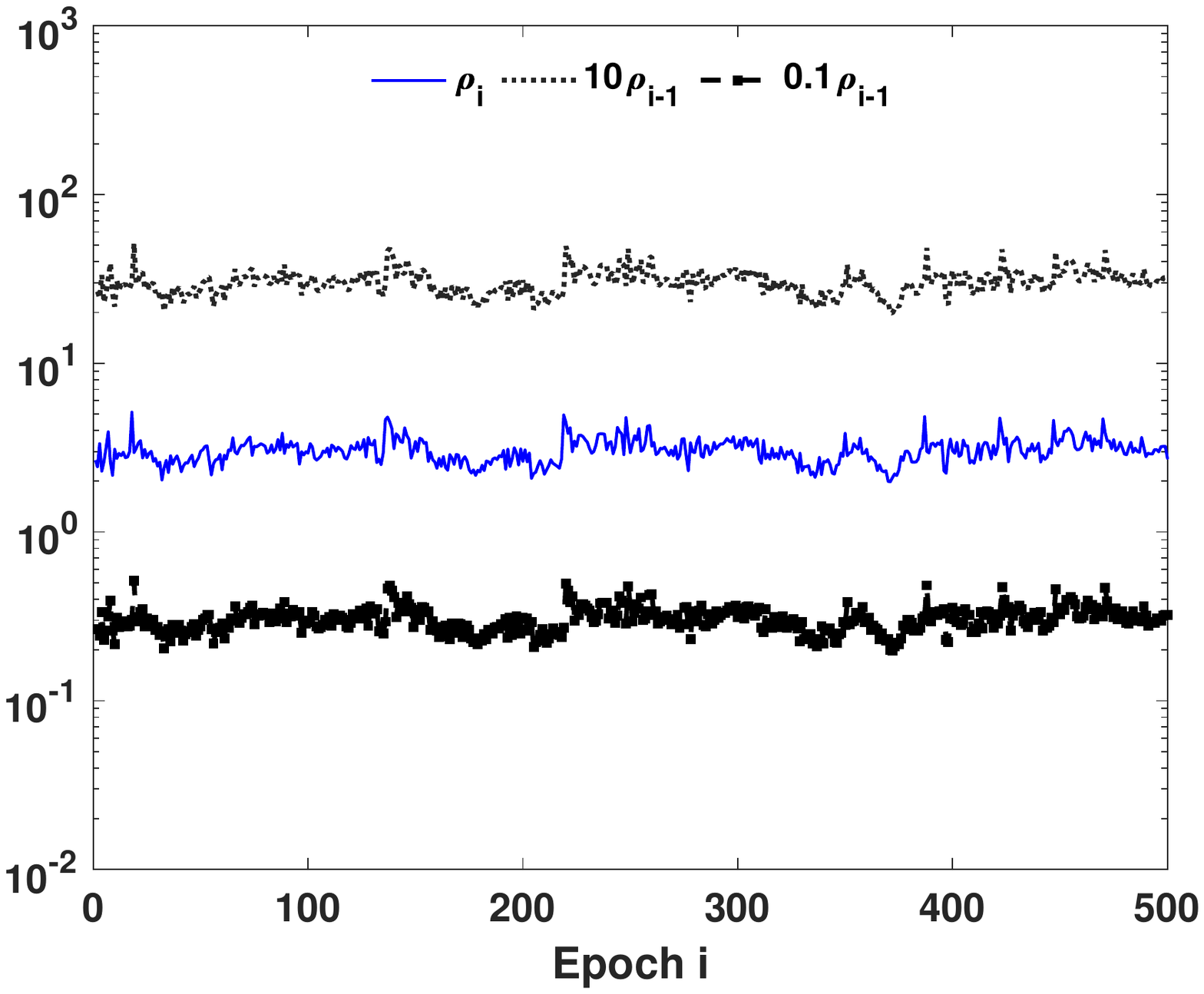} 
	\caption{(a)}
\end{subfigure}
\begin{subfigure}{0.45\textwidth}
	\includegraphics[trim = 1.25cm 7.5cm 1.25cm 7cm, width = 0.8\textwidth]{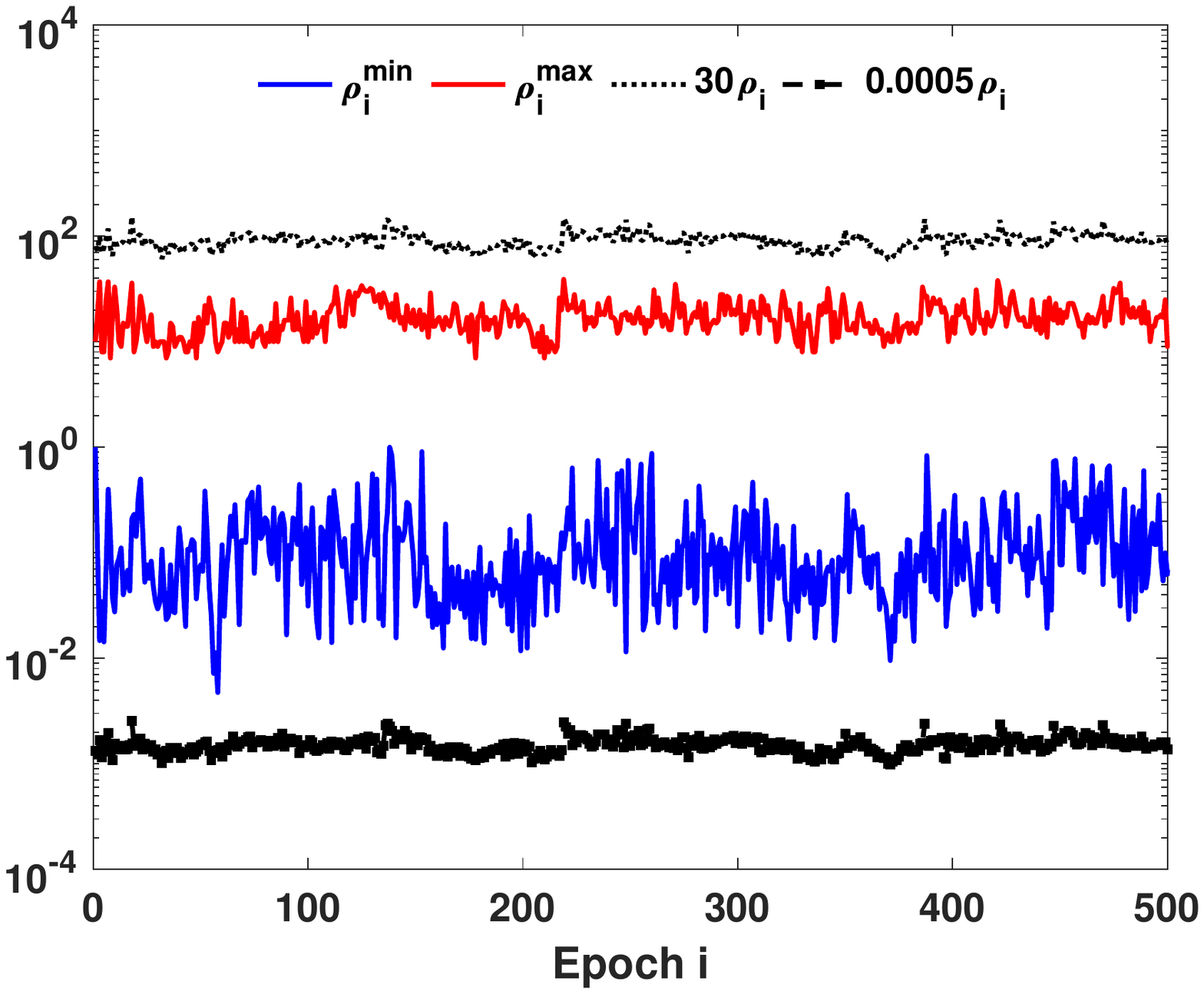}  
	\caption{(b)}
\end{subfigure}
\begin{subfigure}{0.45\textwidth}
	\centering
	\includegraphics[trim = 1.25cm 7.5cm 1.25cm 7cm, width = 0.8\textwidth]{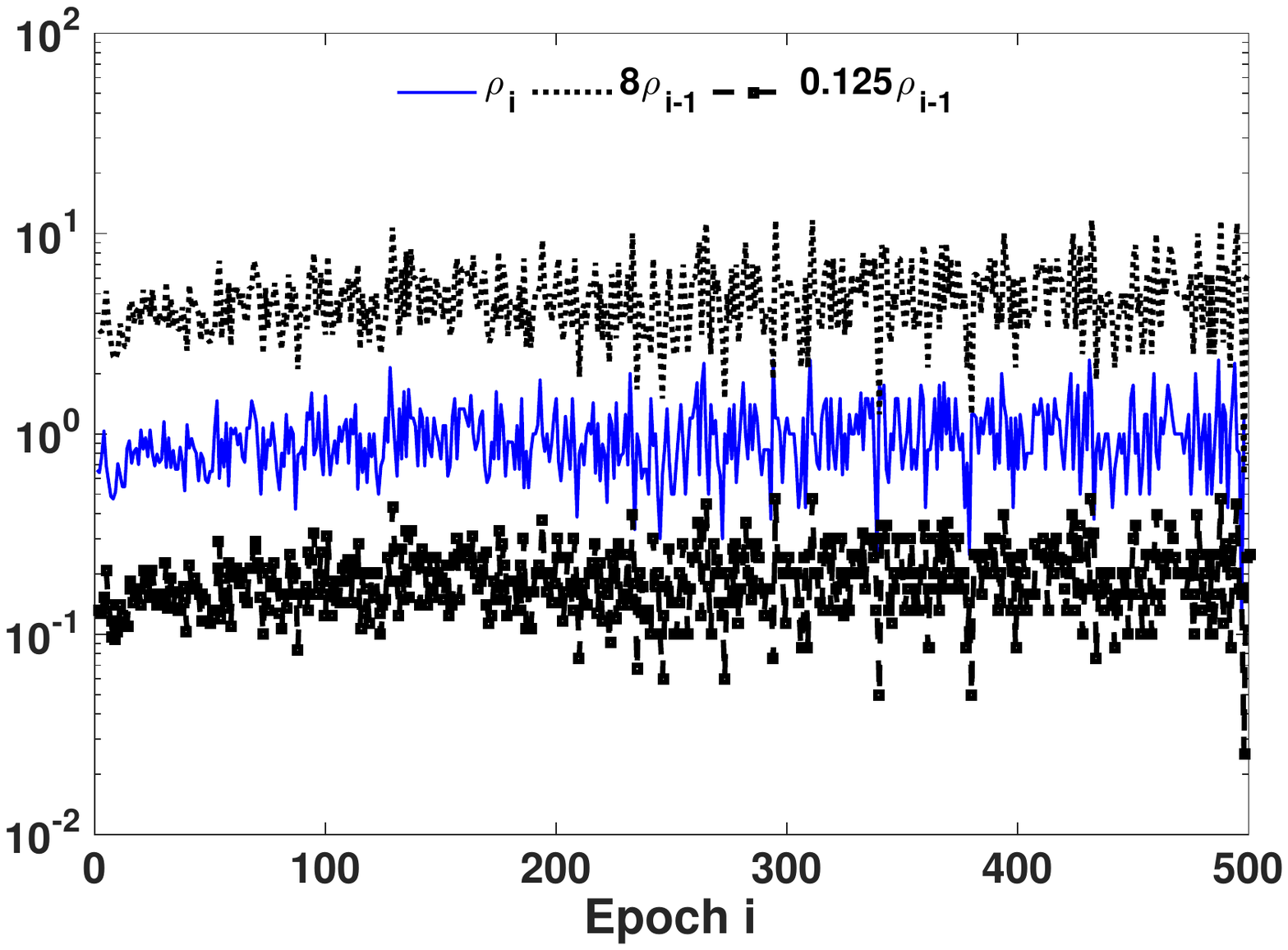} 
	\caption{(c)}
\end{subfigure}
\begin{subfigure}{0.45\textwidth}
	\includegraphics[trim = 1.25cm 7.5cm 1.25cm 7cm, width = 0.8\textwidth]{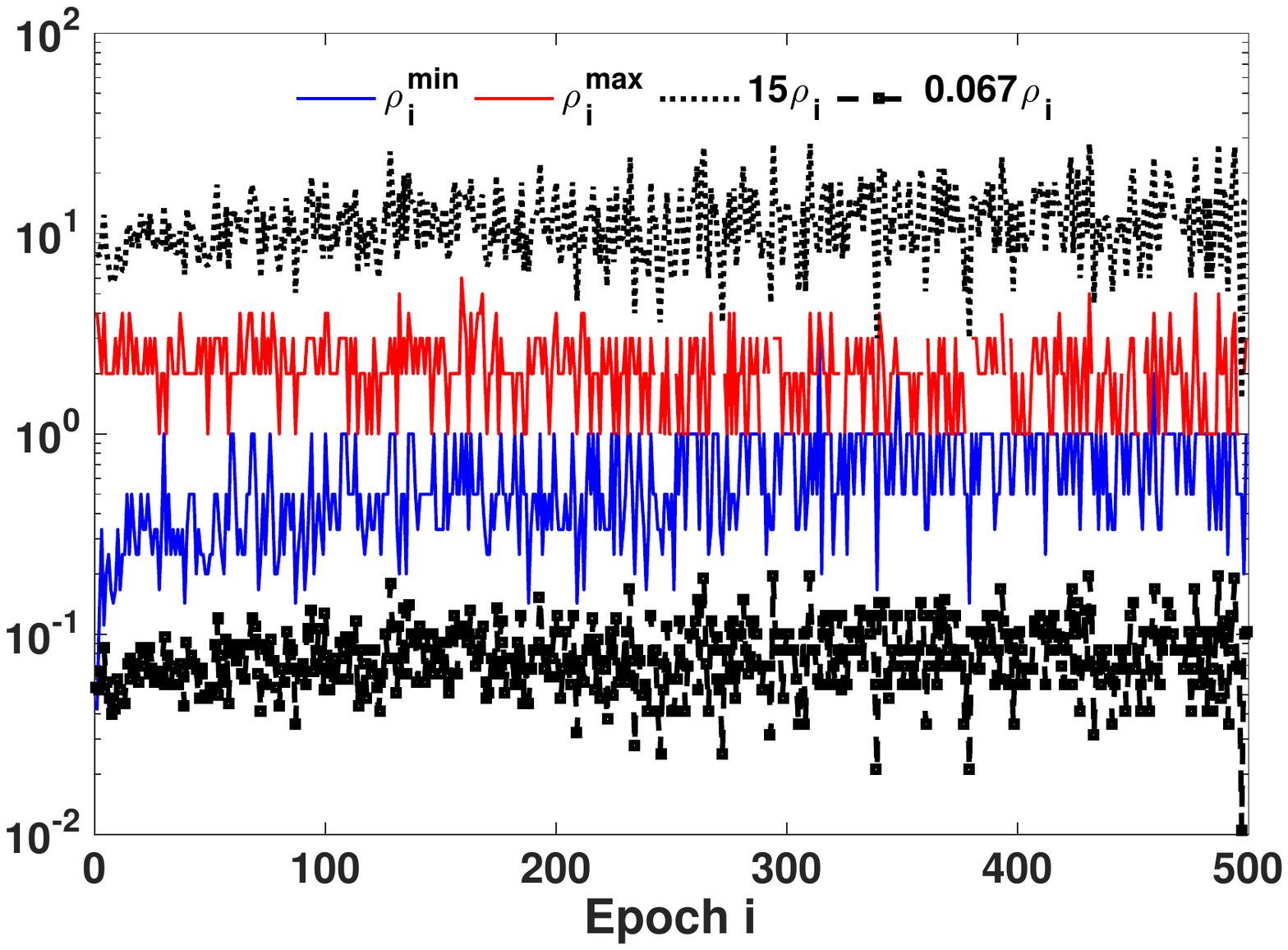}  
	\caption{(d)}
\end{subfigure}
\begin{subfigure}{0.45\textwidth}
	\centering
	\includegraphics[trim = 1.25cm 7.5cm 1.25cm 7cm, width = 0.8\textwidth]{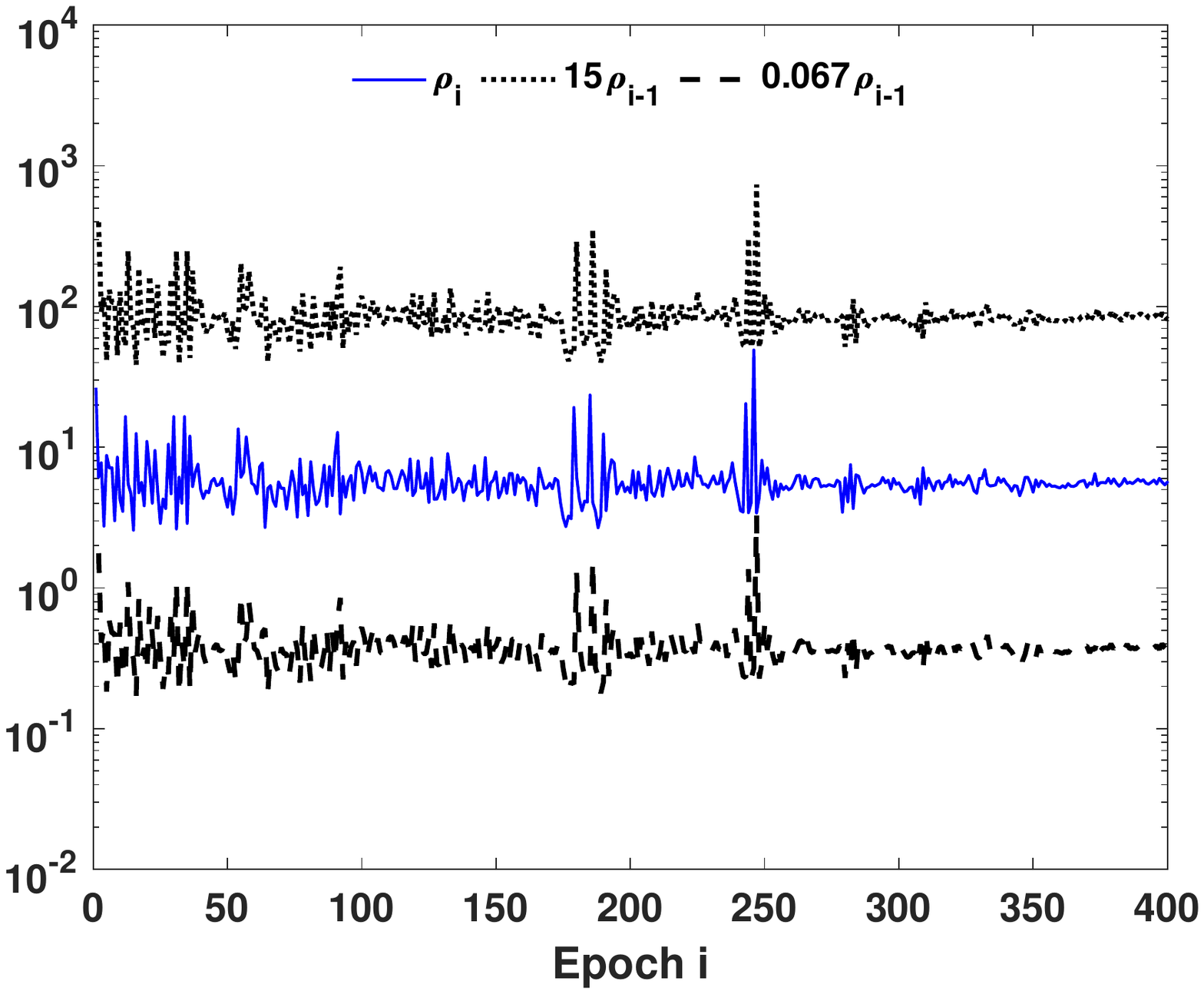} 
	\caption{(e)}
\end{subfigure}
\begin{subfigure}{0.45\textwidth}
	\includegraphics[trim = 1.25cm 7.5cm 1.25cm 7cm, width = 0.8\textwidth]{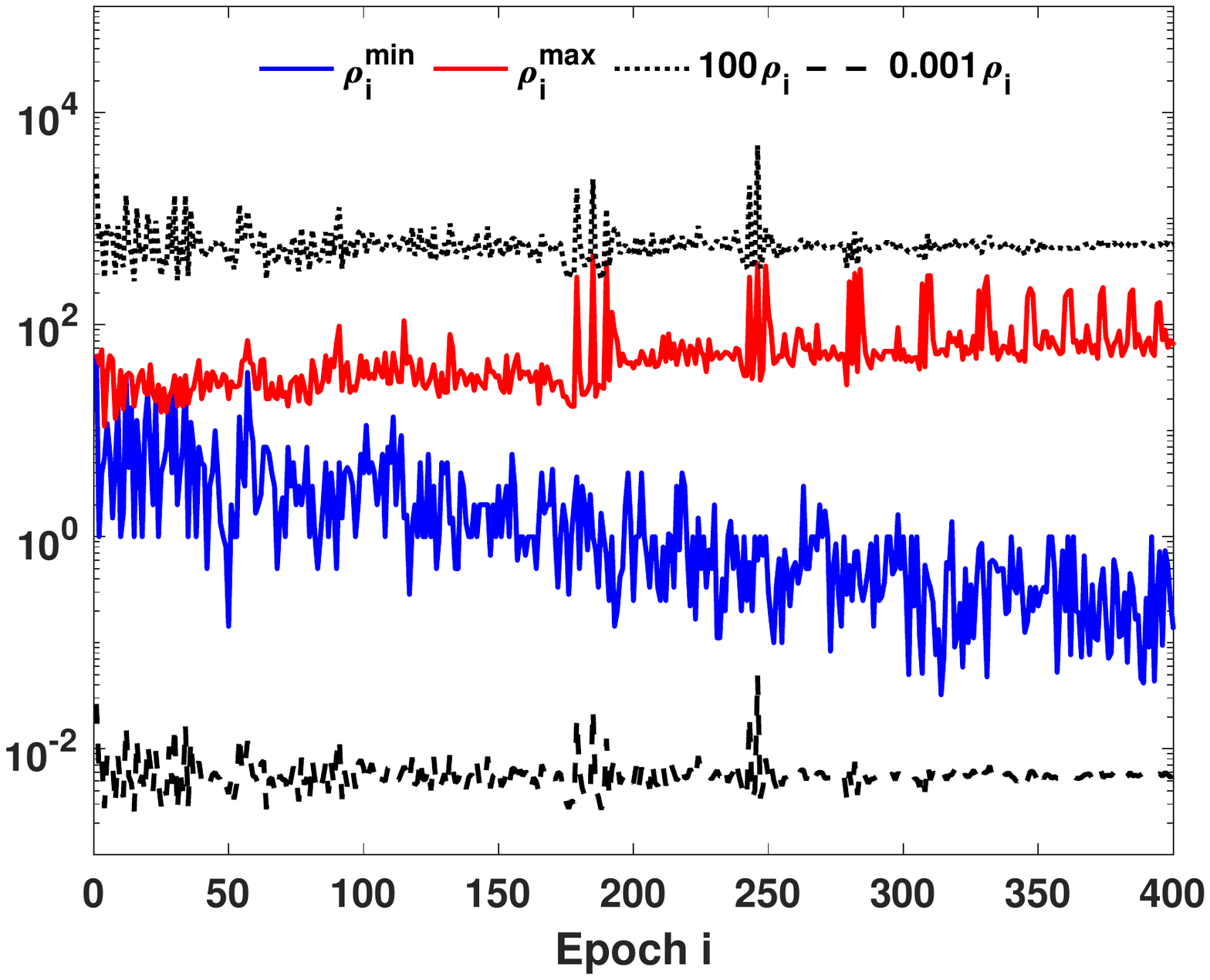}  
	\caption{(f)}
\end{subfigure}
\begin{subfigure}{0.45\textwidth}
	\centering
	\includegraphics[trim = 1.25cm 7.5cm 1.25cm 7cm, width = 0.8\textwidth]{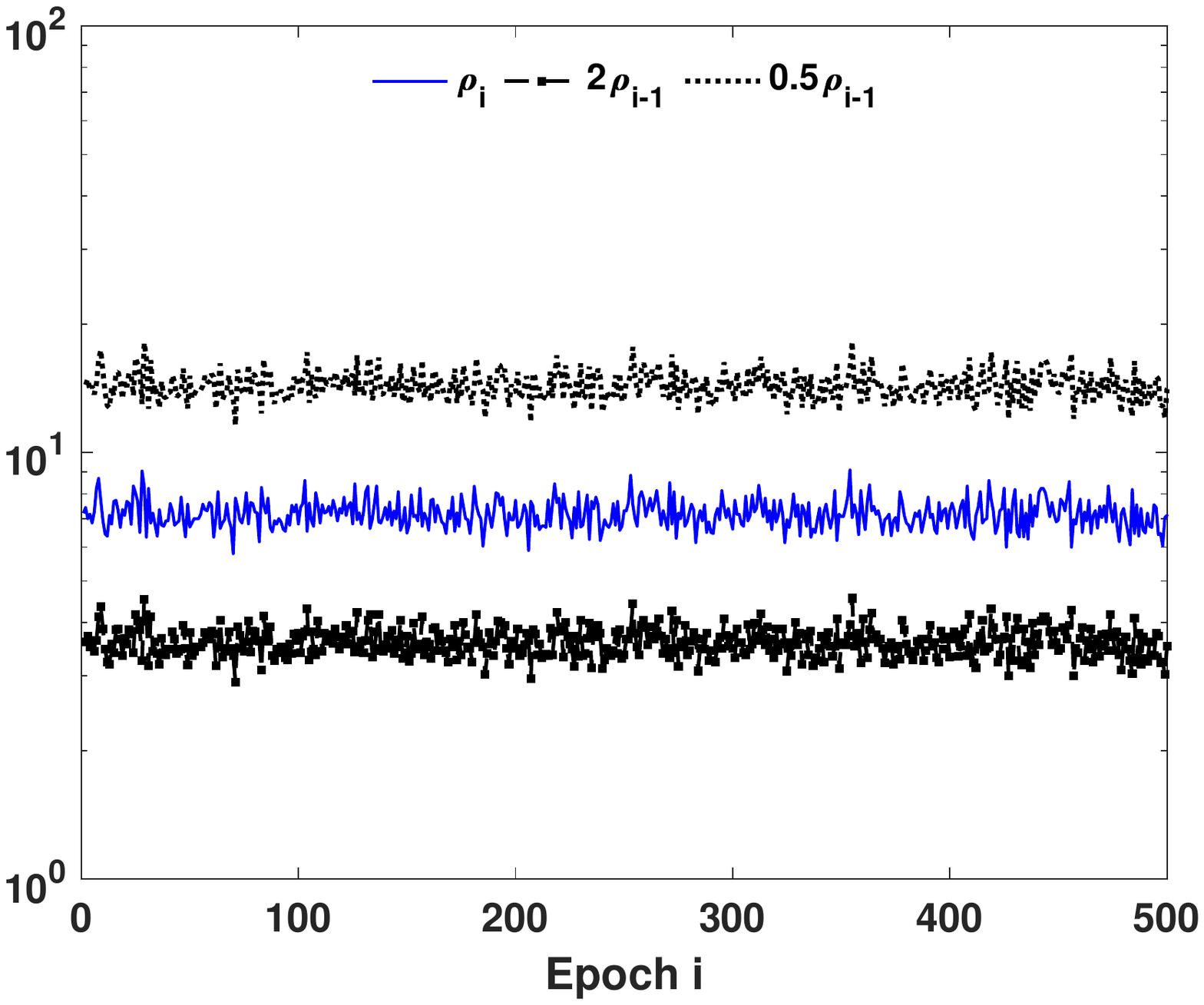} 
	\caption{(g)}
\end{subfigure}
\begin{subfigure}{0.45\textwidth}
	\includegraphics[trim = 1.25cm 7.5cm 1.25cm 7cm, width = 0.8\textwidth]{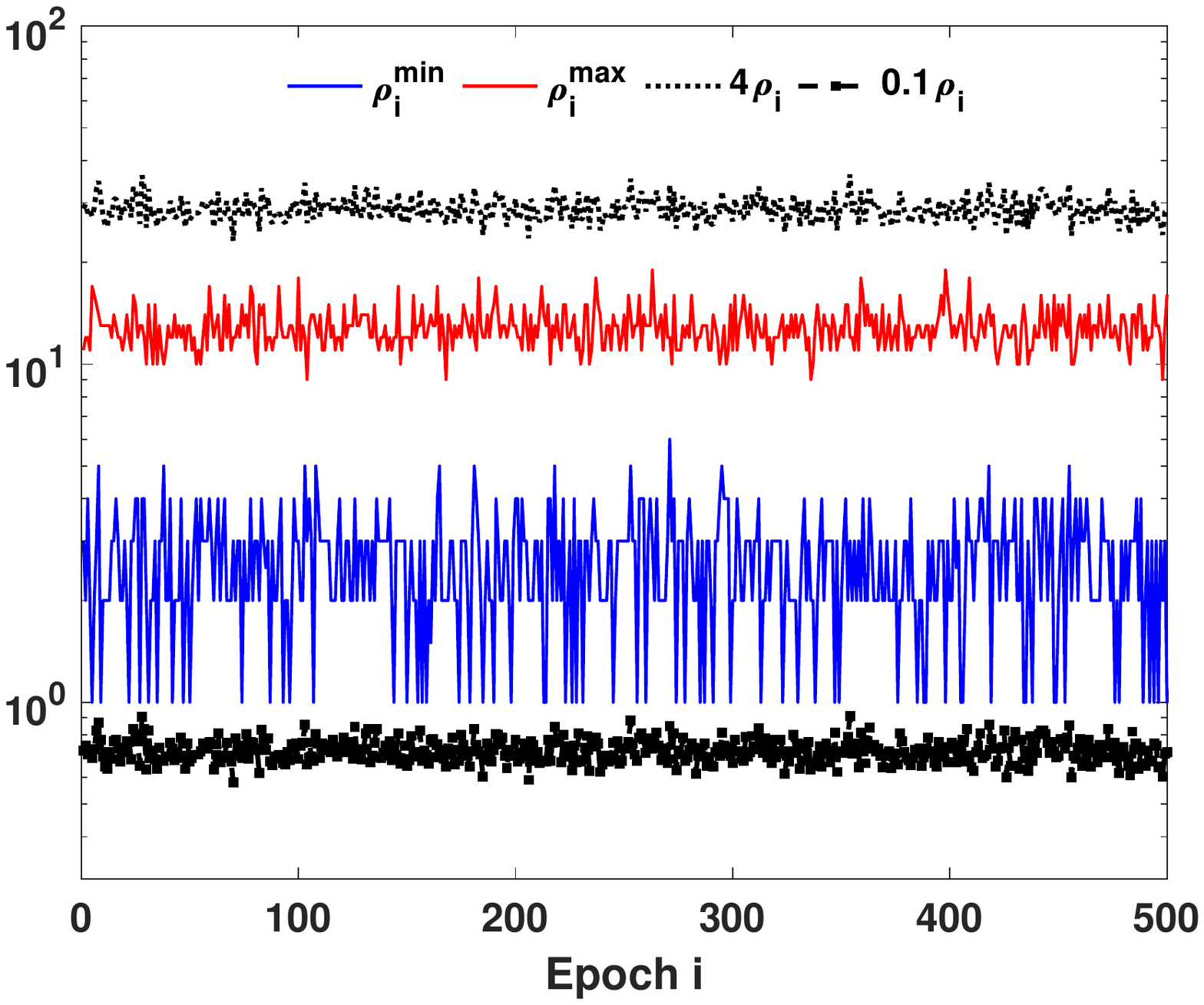}  
	\caption{(h)}
\end{subfigure}

\caption{Testing Assumptions \Aone and \Atwo for (a) \& (b) Bitcoin Network, (c) \& (d) BitTorrent Redhat, (e) \& (f) Ethereum Network, and (g) \& (h) Gnutella Network.}
\label{fig:BitcoinAssumptions}
\vspace{-5pt}
\end{figure*}

\end{document}